\newtheorem{proposition}{\textbf{Proposition}}
\newtheorem{problem}{\textbf{Problem}}
\newtheorem{corollary}{\textbf{Corollary}}
\newtheorem{remark}{\textbf{Remark}}
\newtheorem{theorem}{\textbf{Theorem}}
\newtheorem{definition}{\textbf{Definition}}
\newtheorem{lemma}{\textbf{Lemma}}
\newtheorem{assumption}{\textbf{Assumption}}
\newcommand{\ubar}{\underaccent{\bar}}
\def\BibTeX{{\rm B\kern-.05em{\sc i\kern-.025em b}\kern-.08em
    T\kern-.1667em\lower.7ex\hbox{E}\kern-.125emX}}
\begin{document}
\title{A Bayesian Nash equilibrium-based moving target defense against stealthy sensor attacks}
\author{David Umsonst, Serkan Sar{\i}ta\c{s}, György D\'an and Henrik Sandberg
\thanks{Date of Submission: February 25, 2021.
This work was supported in part by the Swedish Research Council (grants 2016-00861 and 2020-03860), and the Swedish Civil Contingencies Agency through the CERCES2.}
\thanks{David Umsonst is with Ericsson Research, Stockholm, Sweden
        {\tt\small david.umsonst@ericsson.com}}
\thanks{Henrik Sandberg is with the Division of Decision and Control Systems in the School of Electrical Engineering and Computer Science at the KTH Royal Institute of Technology, 10044 Stockholm, Sweden
        {\tt\small hsan@kth.se}}
\thanks{Serkan~Sar{\i}ta\c{s} is with the Department of Electrical and Electronics Engineering, Middle East Technical University, 06800, Ankara, Turkey.
        {\tt\small ssaritas@metu.edu.tr}}
\thanks{György D\'an is with the Division of Division of Network and Systems Engineering in the School of Electrical Engineering and Computer Science at the KTH Royal Institute of Technology, 10044 Stockholm, Sweden
        {\tt\small gyuri@kth.se}}}

\maketitle

\begin{abstract}
We present a moving target defense strategy to reduce the impact of stealthy sensor attacks on feedback systems.
The defender periodically and randomly switches between thresholds from a discrete set to increase the uncertainty for the attacker and make stealthy attacks detectable.
However, the defender does not know the exact goal of the attacker but only the prior of the possible attacker goals.
Here, we model one period with a constant threshold as a Bayesian game and use the Bayesian Nash equilibrium concept to find the distribution for the choice of the threshold in that period, which takes the defender's uncertainty about the attacker into account.
To obtain the equilibrium distribution, the defender minimizes its cost consisting of the cost for false alarms and the cost induced by the attack. 
We present a necessary and sufficient condition for the existence of a moving target defense and formulate a linear program to determine the moving target defense.
Furthermore, we present a closed-form solution for the special case when the defender knows the attacker's goals.
The results are numerically evaluated on a four-tank process.
\end{abstract}

\begin{IEEEkeywords}
Bayesian games, Cyber-physical security, Game theory, Optimization, Optimal Control, Moving Target Defense, Detection threshold, False data injection attacks
\end{IEEEkeywords}

\section{Introduction}
\label{sec:Introduction}
Critical infrastructures and control systems are increasingly connected to public communication networks, such as the Internet, and constitute geographically distributed cyber-physical systems (CPS). 
The use of public network infrastructures can save costs, for example cabling, but also increase the performance of the CPS.
However, this interconnection comes at the price of vulnerability to cyber-attacks, which are already impacting critical infrastructures, such as the Ukranian power grid~\cite{UkraineAttack}, as well as industrial control systems, such as a steel mill in Germany~\cite{GermanSteelAttack}.

To improve CPS security and complement existing information technology (IT) security measures, such as encryption and authentication, a new branch of security measures based on control-theoretic methods has emerged over the last decade. 
These novel security measures are based on the physics of the CPS and use physical models to detect, isolate, and mitigate malicious attacks. 
Hence, the control-theoretic security measures are a complementary approach to the IT security measures.
In the authors' opinion the main difference is between IT security measures and control-theoretic security measures is that IT security measures consider cryptography and logical isolation, while control-theoretic security measures are based on the physical models of the closed-loop system.
Using both control-theoretic and IT security measures constitutes a defense-in-depth approach to security.
An introduction to this topic can be found in the tutorial papers \cite{TutorialOnSecurityAnnualReview} and \cite{TutorialOnSecurityECC19}.
Since the attacker and the operator/defender are rational entities, their interaction is strategic and can thus be modeled using game-theoretic tools, see, for example, \cite{OverviewOfDynamicGamesInCPSSecurity}.

An emerging approach to detect attacks and to limit their impact, which can combine both physical models and game theory, are moving target defense (MTD) strategies \cite{MTDBook-PartI} that induce controlled uncertainties into the CPS to confuse the attacker.
Gairo~\emph{et al.}~\cite{MTDACC19SwitchingSensors}, for example, randomly switch between the sensors used to detect otherwise stealthy attacks, while in \cite{WatermarkingForNetworkedSystems} a random watermarking signal is injected into the CPS to make stealthy attacks detectable.
Furthermore, perturbations of power line impedances are analyzed in \cite{MTDInPowerSystem2020}, where an in-depth analysis is conducted to determine when the MTD will be successful.
Another system-switching approach is considered in \cite{KanellopoulosTAC2020}, which considers both actuator and sensor attacks.
However, the MTD strategies in \cite{MTDACC19SwitchingSensors, WatermarkingForNetworkedSystems,MTDInPowerSystem2020,KanellopoulosTAC2020} directly influence the closed-loop behavior of the CPS and can decrease its performance.
Griffioen \emph{et al.}~\cite{GriffioenTAC2021} propose three different MTD schemes, where the first one is similar to \cite{MTDACC19SwitchingSensors} but it also switches the plant and input matrix and not only the measurements used. The second MTD of \cite{GriffioenTAC2021} introduces an auxiliary system to not influence the closed-loop behavior and simultaneously detect an attack, while the third MTD utilizes the nonlinearities in the measurements.
In this work, we propose a moving target defense that is placed in the anomaly detector of the CPS, which is located outside of the control loop (see Figure~\ref{fig:BlockDiag}). 
Therefore, the proposed MTD neither influences the closed-loop performance directly, nor is there a need of introducing new auxiliary components to the system such that the controller and the moving target defense can be designed independently.

\subsection{Contribution}
When using an anomaly detector the defender faces a trade-off between the cost for false alarms and the cost for the impact of a stealthy attack, where ideally both costs should be as small as possible,
However, fewer false alarms typically lead to a larger attack impact, and vice versa, such that we cannot minimize both costs at the same time.
Therefore, we formulate a game where the defender periodically chooses a detector threshold at random to mitigate the trade-off between its cost for false alarms and its cost induced by the stealthy attack launched by the attacker.
The goal of the attacker is to maximize its payoff, which can, for example, be characterized by an unsafe region in the system's state space, while the defender wants to minimize the cost induced by false alarms and the cost induced by the attack.
However, the defender is uncertain about the payoff function the attacker tries to optimize and only has a belief of facing an attacker with a certain payoff function.
Here, we present an initial analysis of this game.
We consider a single period with a constant threshold and look at the threshold choice for this period. 
We show that there is an equivalent matrix game to analyse the equilibrium strategies of each player.

The matrix game formulation is used to provide a necessary and sufficient condition for when a Bayesian Nash equilibrium exists in which the defender's strategy is mixed and does not concentrate the whole probability on one action.
The defender's equilibrium strategy is then a moving target defense strategy.
Furthermore, by using the structure of the matrix game, we show that the Bayesian Nash equilibrium can be obtained by solving a linear program.
For the special case where the defender knows the attacker's type, we provide a closed-form solution for the Nash equilibrium, which gives us insights about the equilibrium strategies of the defender and attacker.
Finally, we numerically verify our results with a four-tank system.

\subsection{Related Work}
Since control systems are typically equipped with an anomaly detector to detect faults, several research groups have investigated how the choice and tuning of the anomaly detector threshold can help limiting the attack impact of stealthy attacks. 
When it comes to the choice of the detector, Murguia \emph{et al.}~\cite{RuthsMultivariate} compare a $\chi^2$ and a CUSUM detector and investigate which detector mitigates the impact of a sensor attack the most.

In the present work, we are interested in the case where the detector is already chosen and we want to define a way to choose the thresholds to limit the attack impact.
Urbina \emph{et al.}~\cite{Alvaro} point out that there will be a trade-off between the number of false alarms and the maximum impact of a stealthy attack when tuning the anomaly detector.

There are several other works that use the anomaly detector threshold to limit the attack impact or to detect attacks.
Ghafouri \emph{et al.}~\cite{GameTheoryThresholds2016} propose a Stackelberg game framework for choosing the detector threshold. 
Both a static choice as well as a dynamic choice of the detector threshold are presented, but the attack is assumed to be detectable. 
The cost that the defender wants to minimize is composed of the cost of false alarms, the cost of the attack impact, and the cost for switching between thresholds.

In \cite{UmsonstCDC18}, we extend the static detector threshold choice of \cite{GameTheoryThresholds2016} to the case of stealthy sensor attacks and prove the existence of such a threshold and provide conditions for the uniqueness.

Niu \emph{et al.}~\cite{DetectorThresholdSwitchingGameSec19} formulated the detector threshold switching problem as a zero-sum Stackelberg game without considering the cost for false alarms.

In \cite{UmsonstCDC20}, we consider a similar game but there the defender exactly knows the attacker's objective. This assumption is relaxed in the present work since the defender only has a prior over possible attacker objectives. Therefore, we extend the results of \cite{UmsonstCDC20} to a broader class of games, namely Bayesian games. Here, we also provide a closed-form solution to the special case considered in \cite{UmsonstCDC20}.

\subsection{Notation}
Let $x\in\mathbb{R}^n$ be an $n$-dimensional column vector and $A\in\mathbb{R}^{m\times n}$ be an $m$-by-$n$ matrix. 
The $i$th element of $x$ is denoted by $x_i$ and $A_{ij}$ corresponds to the element in the $i$th row and $j$th column of $A$. 
Further $x_{i:j}$ is the vector $[x_{i},\ x_{i+1}, \ldots, x_{j-1},\ x_{j}]^T$, where $i\leq j$.
If a random variable $x$ has a Gaussian distribution with mean $\mu\in\mathbb{R}^n$ and covariance matrix $\Sigma\in\mathbb{R}^{n\times n}$, we denote it as $x\sim\mathcal{N}(\mu,\Sigma)$.
The expected value of a random variable $x$ is denoted by $\mathbb{E}\lbrace x\rbrace$.
The $n$-by-$n$ identity matrix is denoted by $I_n$ and an $n$-dimensional column vector with all elements equal to one as $1_n$, while the indicator function of an event $D$ is represented by $\mathbf{1}_{\lbrace D\rbrace}$.
\section{System Model}
\label{sec:ProblemFormulation}
In this section, we introduce the models for the plant, controller, and detector and present our assumptions on the attacker and the defender.
Further, in Section~\ref{sec:DiscussionOfAssumptions} we will discuss the assumptions made on the system, the attacker, and the defender.
Figure~\ref{fig:BlockDiag} shows a block diagram of the sensor attack scenario that we consider.
\begin{figure}
    \centering
    \includegraphics[scale=0.4]{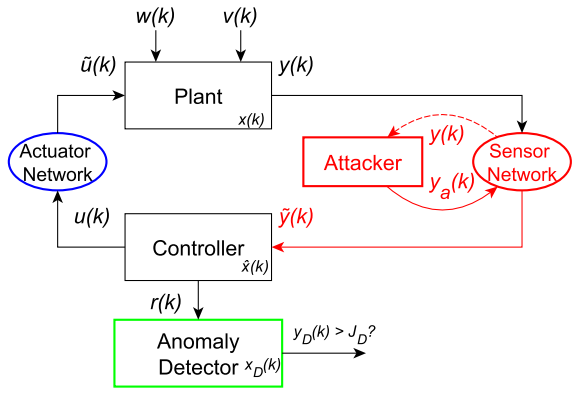}
    \caption{Block diagram of the sensor attack scenario.}
    \label{fig:BlockDiag}
\end{figure}
\subsection{Plant and Controller Model}
In our setup, the plant receives actuator signals and sends measurement signals over a network. 
We model the plant in Fig.~\ref{fig:BlockDiag} as a linear discrete-time system,
\begin{align*}
x(k+1)&=Ax(k)+B\tilde{u}(k)+w(k),\\
y(k)&=Cx(k)+v(k),
\end{align*}
where  $x(k)\in\mathbb{R}^{n_x}$ is the plant's state, $\tilde{u}(k)\in\mathbb{R}^{n_u}$ is the actuator signal received over the network, $y(k)\in\mathbb{R}^{n_y}$ is the measurement signal, $w(k)\in \mathbb{R}^{n_x}$ is the process noise, and $v(k)\in\mathbb{R}^{n_y}$ is the measurement noise. 
Both $w(k)$ and $v(k)$ have independent and identically distributed zero-mean multivariate Gaussian distributions with covariance matrices $\Sigma_w$ and $\Sigma_v$, respectively. Further, $w(k)$ and $v(k)$ are independent processes.
The system, input, and output matrices are $A\in\mathbb{R}^{n_x\times n_x}$, $B\in\mathbb{R}^{n_x\times n_u}$, and $C\in \mathbb{R}^{n_y\times n_x}$, respectively.
\begin{assumption}
\label{assum:StabilizableDetectableSystem}
$(A,B)$ is stabilizable, $(C,A)$ is detectable, and $(A,\Sigma_w^{\frac{1}{2}})$ has no uncontrollable modes on the unit circle.
\end{assumption}

The plant is controlled using a Kalman filter-based observer, which estimates the plant's state as $\hat{x}(k)\in\mathbb{R}^{n_x}$. The dynamics of the controller are
\begin{align*}
\hat{x}(k+1)&=A\hat{x}(k)+Bu(k)+L(\tilde{y}(k)-C\hat{x}(k)),\\
u(k)&=-K\hat{x}(k),
\end{align*}
where $\tilde{y}(k)$ is the measurement signal received over the network, $u(k)$ is the actuator signal determined by the controller, and $K$ and $L$ are the controller gain and steady-state Kalman gain, respectively.
Further, $L=APC^T(CPC^T+\Sigma_v)^{-1}$, where $P$ is the stabilizing solution to the algebraic Riccati equation
\begin{align*}
    P&=APA^T+\Sigma_w-(APC^T)(CPC+\Sigma_v)^{-1}(APC^T)^T,
\end{align*}
and $P$ exists due to Assumption~\ref{assum:StabilizableDetectableSystem}.

\subsection{Detector Model}
Since faults and/or malicious attacks can occur, the closed-loop system is equipped with an anomaly detector on the controller side, which has the possibly nonlinear dynamics
\begin{equation}
\label{eq:DetectorModel}
    \begin{aligned}
        x_D(k+1)&=\theta(x_D(k),r(k)),\\
        y_D(k+1)&=d(x_D(k),r(k)),
    \end{aligned}
\end{equation}
where $x_D(k)\in\mathbb{R}^{n_D}$ is the detector's internal state, ${y_D(k)\in\mathbb{R}_{\geq 0}}$ is the detector output, and $r(k)\in\mathbb{R}^{n_y}$ is the detector input. 
The exact structure of $\theta(x_D(k),r(k))$ and $d(x_D(k),r(k))$ depends on the detector the defender will use. 
For example, in Section~\ref{sec:NumericalEvaluation} we consider the static $\chi^2$ detector, i.e., $y_D(k)=\|r(k+1)\|_2^2$. More detector models can be found in \cite{FaultTolerantControl,ModelBasedFaultDiagnosis}.

We define the input $r(k)$ to be the residual signal, which is the normalized difference between the received and the predicted measurements, i.e.,
\begin{align*}
r(k)=\Sigma_{\tilde{r}}^{-\frac{1}{2}}\left(\tilde{y}(k)-C\hat{x}(k)\right),
\end{align*}
where ${\Sigma_{\tilde{r}}=CPC^T+\Sigma_v}$ is the steady state covariance matrix of $\tilde{r}(k)=\tilde{y}(k)-C\hat{x}(k)$ under nominal conditions (no faults, no attacks), i.e., $\tilde{u}(k)=u(k)$ and $\tilde{y}(k)=y(k)$ for all $k$.
\begin{assumption}
\label{assum:DetectorConditions}
The detector dynamics \eqref{eq:DetectorModel} fulfill the subsequent three conditions:
    \begin{equation*}
		\begin{aligned}
			1)&\,\theta\big(x_D(k),r(k)\big)\ \text{and}\ d\big(x_D(k),r(k)\big)\ \text{are}\\
			&\,\text{continuous\ in}\ x_D(k)\ \text{and}\ r(k);\\
			2)&\,d\big(x_D(k),r(k)\big)\ \text{is\ coercive\footnotemark\ in}\ x_D(k)\ \text{and}\ r(k);\\
			3)&\,d(0,0) = 0\ \text{and}\ \theta(0,0) = 0.
		\end{aligned}
	\end{equation*}
\end{assumption}
\footnotetext{A function $g(x)$ is called coercive if $g(x)\rightarrow\infty$ as $\|x\|\rightarrow\infty$.}
If the predictions are accurate, i.e., $r(k)\approx 0$, the detector output should be small.
However, if the predictions are inaccurate, both the detector state and the detector output should increase.
Furthermore, the detector triggers an alarm whenever the detector output $y_D(k)$ exceeds the detection threshold $J_D>0$.

Since the detector input is a random variable under nominal conditions, that is, $r(k)\sim\mathcal{N}(0,I_{n_y})$ due to the Kalman filter, the detector output $y_D(k)$ is also a random variable. 
To avoid too frequent \emph{false alarms}, i.e., alarms triggered under nominal conditions when there is \emph{no attacker present}, the threshold $J_D$ should be chosen large enough. 
However, if it is chosen too large, the detector might not be able to detect anomalies (missed detection). 
Hence, there is a trade-off between false alarms and missed detections when choosing $J_D$. 
Urbina \emph{et al.}~\cite{Alvaro} further noted that there is also a trade-off between false alarms and the attack impact of a stealthy attack, which in our case corresponds to a missed detection.
For example, a larger $J_D$ reduces the frequency of false alarms but gives the attacker more space to remain stealthy while causing harm.

Since the amount of false alarms plays an important role in detector tuning, we denote by $\tau$ the \emph{mean time between false alarms}. 
The larger time between false alarms we want to achieve, the larger the detector threshold has to be such that the following is a reasonable assumption.

\begin{assumption}
    \label{assum:DetectorThresholdIncreasing}
    The detector threshold is a strictly increasing (possibly nonlinear) function of $\tau$, i.e., $J_D(\tau_a)<J_D(\tau_b)$ if, and only if, $\tau_a<\tau_b$.
\end{assumption}

Instead of considering the threshold, $J_D$, we will consider the mean time between false alarms, $\tau$, in the following, since there is a direct relation between $J_D$ and $\tau$. 
Further, the value of $\tau$ is more meaningful to the operator.
To circumvent the trade-off between false alarms and missed detections (and the impact of stealthy attacks), the defender could periodically randomize the choice of the mean time between false alarms $\tau$ such that in one period the threshold reduces the number of false alarms while in another it limits the impact of a potential stealthy attack.
That is, it chooses $\tau$ periodically from the fixed set $\lbrace \tau_1,\ldots,\tau_m\rbrace$, pre-determined by the defender, with probability distribution $p$, where $1\leq\tau_1<\tau_2<\ldots<\tau_m$, $p_{i}\in[0,1]$ is the probability of choosing $\tau_i$, and $\sum_{i=1}^mp_i=1$. 
We make the following assumption about the random choice of $\tau$.
\begin{assumption}
\label{assum:IIDSwitchingPeriod}
At the beginning of each period, $\tau$ is drawn from the probability distribution $p$, independent from previous realizations.
\end{assumption}

Our definition of a moving target defense is stated next.
\begin{definition}
    \label{def:MovingTargetDefense}
    A probability distribution $p\in\mathbb{R}^m$ over a fixed set of mean times between false alarms $\lbrace\tau_1,\ldots,\tau_m\rbrace$ is a \emph{moving target defense} if $p$ does not have singleton support, i.e., the probability of choosing $\tau_i$ fulfills $p_i\in[0,1)$ for all $i\in\lbrace 1,\cdots,m \rbrace$.
\end{definition}
\subsection{Attacker Model}
In this paper, our focus lies on sensor attacks.
\begin{assumption}
    The measurement signals are subject to an additive attack $y_a(k)$ chosen by the attacker, and the actuator signals are transmitted fault/attack-free, i.e., ${\tilde{y}(k)=y(k)+y_a(k)}$ and $\tilde{u}(k)=u(k)$.
\end{assumption}

Furthermore, we make the following assumption on the attacker's model knowledge.
\begin{assumption}
    \label{assum:AttackerKnowledge}
    The attacker knows the closed-loop system matrices, $A$, $B$, $C$, $L$, $K$, the noise statistics $\Sigma_w$, $\Sigma_v$, and the detector dynamics. 
    The attack starts at time $k=0$ and has a length of $N$ time steps.
    The length $N$ of the attack is such that the attacker is able to complete the attack  before the next threshold switch.
    The attacker knows both $x_D(0)$ and $\hat{x}(0)$, and has access to the measurements $y(k)$.
    Moreover, the attacker knows the function $J_D(\tau)$ and the set $\lbrace\tau_1,\ldots,\tau_m \rbrace$ but \emph{not} the exact value of $\tau$.
\end{assumption}

An attacker according to Assumption~\ref{assum:AttackerKnowledge} can launch an attack of the form (see~\cite{RuthsMultivariate})
\begin{align*}
    y_a(k)=-y(k)+C\hat{x}(k)+\Sigma_{\tilde{r}}^{\frac{1}{2}}a(k),
\end{align*}
which gives the attacker complete control over the detector input, i.e., $r(k)=a(k)$.
This attack is a \textit{closed-loop} attack since it uses the measurements $y(k)$, whereas $a(k)$ can be interpreted as the attacker's reference signal.
The attacker can define the set of attacks that do not trigger an alarm for a given $\tau$ as
\begin{align*}
   \resizebox{0.99\hsize}{!}{
        $\mathbb{A}(\tau):=
   \left\lbrace \lbrace a,x_D(0)\rbrace\ \Bigg| \  \begin{aligned}&x_D(k+1)=\theta\big(x_D(k),a(k)\big)\\ &y_D(k+1)=d\big(x_D(k),a(k)\big)\leq J_D(\tau)\\
    &k\in[0,N-1]\end{aligned}\right\rbrace$%
        },
\end{align*}
where $a=[a(0)^T,\cdots, a(N-1)^T]^T$ is the complete attack trajectory during the attack. 
Here, the set $\mathbb{A}(\tau)$ constrains the size of $y_a(k)$ as well by constraining $a(k)$.
\begin{remark}
Note that $\mathbb{A}(\tau_a)\subset\mathbb{A}(\tau_b)$ if $\tau_a<\tau_b$ due to Assumption~\ref{assum:DetectorThresholdIncreasing}. 
Furthermore, where appropriate we will use ${a\in\mathbb{A}(\tau)}$ instead of $\lbrace a,x_D(0)\rbrace\in\mathbb{A}(\tau)$ for the sake of readability.
\end{remark}

If the attacker manages to choose $a$ such that $\lbrace a,x_D(0)\rbrace\in\mathbb{A}(\tau)$ in the current period with a constant threshold, the attacker remains stealthy, which is the main constraint of the attacker as described below.

\begin{assumption}
\label{assum:AttackerGoal}
The attacker has one of $n_\phi$ different types, which determine the objective of the attacker.
An attacker of type $\phi$ wants to maximize its expected payoff characterized by $f_\phi(a)$.
Further, if the attack is detected, the attacker receives no payoff and, therefore, it wants to remain stealthy, i.e., $y_D(k+1)\leq J_D(\tau)$ for $k\in[ 0,N-1]$.
\end{assumption}

Next, we define the attacker's expected payoff for a given attacker type,
\begin{align}
\label{eq:AttackerPayoffGivenAttackType}
 \mathfrak{p}(\tau, a | \phi ):=\mathbf{1}_{\lbrace{\lbrace a,x_D(0)\rbrace\in \mathbb{A}(\tau)}\rbrace} f_{\phi}(a),
\end{align}
where we use the indicator function to take into account that the attacker will not get any payoff if it is detected.

\begin{assumption}
    \label{assum:PayoffFunctionOfAttacker}
    For a given attacker type $\phi$, the corresponding expected attacker payoff $f_\phi(a)$ is continuous and fulfills $\max_{a\in\mathbb{A}(\tau_a)}f_\phi(a)<\max_{a\in\mathbb{A}(\tau_b)}f_\phi(a)$ if $\tau_a<\tau_b$ except when $f_{\phi}(a)= 0$ for all $a$.
\end{assumption}

\begin{remark}
    If $f_{\phi}(a)$ is a continuous, convex function and $\mathbb{A}(\tau)$ is a closed convex set, then Assumption~\ref{assum:PayoffFunctionOfAttacker} is fulfilled, since then $\max_{a\in\mathbb{A}(\tau)}f_\phi(a)$ is equivalent to a concave minimization problem, whose optimizers are the extreme points of $\mathbb{A}(\tau)$ (see \cite{ConcaveMinimizationBook}).  
    If we use a vector norm-based stateless detector, such as the $\chi^2$ detector, $\mathbb{A}(\tau)$ is a closed convex set.
\end{remark}

\subsection{Defender model}
Next, we describe our defender model.
When choosing $\tau$ the defender needs to take into account the expected cost that is induced by the false alarms in the nominal case, but also the expected cost of an undetectable attack.
This leads to the following cost function for the defender assuming an attacker of type $\phi$,
\begin{align}
\label{eq:DefenderCostGivenAttackType}
\mathfrak{c}(\tau, a |\phi):=\frac{c_\text{F}}{\tau}+\mathfrak{p}(\tau, a | \phi ),
\end{align}
where $c_\text{F}>0$ is the cost factor for false alarms.
Note that while the attacker has one of $n_\phi$ possible types, the defender has only one type, but its cost function is influenced by the attacker type.
\begin{remark}
\label{rem:DummyAttackersMakesSense}
Since the defender's cost \eqref{eq:DefenderCostGivenAttackType} is always influenced by the attacker's payoff, it is reasonable to introduce an attacker type with zero payoff, i.e., $f_{\phi}(a)=0$ for all $a$. 
This means that the case of there not being an attacker present in the system is modeled as well in our moving target defense framework.
\end{remark}

Next, let us make the following assumption about the knowledge the attacker and defender have about each other.
\begin{assumption}
\label{assum:PlayerKnowledge}
The defender knows
\eqref{eq:AttackerPayoffGivenAttackType} for each possible $\phi$ and it has a prior $\pi_\phi$ of facing an attacker of type $\phi$, where $\pi_\phi\in[0,1]$ and $\sum_{\phi=1}^{n_\phi}\pi_{\phi}=1$.
The attacker, in addition to Assumption~\ref{assum:AttackerKnowledge}, knows the defender's cost function \eqref{eq:DefenderCostGivenAttackType}, its own type $\phi\in\lbrace 1,\ldots,n_\phi\rbrace$, and the defender's prior $\pi_\phi$ for each attacker type.
\end{assumption}

\subsection{Discussion of the system model}
\label{sec:DiscussionOfAssumptions}
In this section, we discuss the assumptions made during the setup of the model.

First, we discuss the assumption about iid choice of $\tau$ (Assumption~\ref{assum:IIDSwitchingPeriod}) and the attack length (Assumption~\ref{assum:AttackerKnowledge}).
Since the values of $\tau$ are realizations of iid random variables, the current value of $\tau$ is independent of its previous values, observing the system does hence not reveal information about $\tau$ beyond its distribution, which the attacker can deduce from its system knowledge.
The attacker will be able to estimate the distribution $p$ under the iid choice if it has access to previous values of $\tau$. This case is already taken into account in our MTD framework.
A change of threshold implies a reconfiguration of the system, which can be costly for the operator of a safety-critical large-scale infrastructure. 
Therefore, the operator does not switch the thresholds too frequently. 
Hence, it is not unreasonable to analyse the case where the attack is carried out during a fixed, but random, configuration.
Note that an approach to consider the cost of a finite amount of switches can be found in \cite{GameTheoryThresholds2016}.
Furthermore, for industrial processes, where a product is produced in batches, an iid choice of the threshold between different batches is also a reasonable assumption. 

The analysis of an attacker that experiences threshold switches during the attack is similar to the analysis we present in the subsequent sections, because we can determine the probability of first choosing $\tau_i$ and then $\tau_j$ due to the iid choice in Assumption~\ref{assum:IIDSwitchingPeriod}.
However, the notation would become more involved.
Therefore, these assumptions simplify the problem formulation so that it becomes mathematically more tractable.

Next, we justify the assumptions on the attacker's knowledge and goals.
According to \cite{SecrecySystemsShannon}, one should design the plant for the worst-case attacker knowledge, because, given enough time, an attacker may be able to obtain a perfect model of the plant, the controller, and the detector. 
For example, the plant and controller could be estimated through system identification techniques from the observed sensor data, while the detector model could be obtained from leaked documentation of the system.
Hence, the extensive knowledge of the attacker about the closed-loop system and detector according to Assumption~\ref{assum:AttackerKnowledge} is in line with \cite{SecrecySystemsShannon}.
In our previous work \cite{UmsonstConEngPract22}, we showed how the attacker can obtain the internal states of both the controller and the detector in an experimental setup. Hence, assuming that the attacker has knowledge of the controller and detector states is not unreasonable.
Further, the knowledge of $x_D(0)$ and $\hat{x}(0)$ can be interpreted as an opportunistic attacker choosing to attack at the best time instant, which we define to occur without loss of generality at $k=0$. 
In contrast, the choice of $\tau$ is not visible in the sensor data observed by the attacker.
In addition, the attacker knowledge in Assumption~\ref{assum:AttackerKnowledge} together with the assumption that the attacker will maximize its objective function $f_{\phi}(a)$ (Assumption~\ref{assum:AttackerGoal}) results in a worst-case scenario for the defender under the given assumption.

While both $x_D(0)$ and $\hat{x}(0)$ depend on the measurements and could, therefore, be estimated by the attacker, $\tau$ is chosen randomly from $\lbrace\tau_1,\ldots,\tau_m \rbrace$ (see Assumption~\ref{assum:IIDSwitchingPeriod}) such that the attacker cannot know the exact value of $\tau$.
Further, $\tau$ does not directly influence the system variables, such that the attacker is also not able to estimate $\tau$ from the measurements.

In Assumption~\ref{assum:AttackerGoal}, we introduce attacker types. A given attacker type, $\phi$, describes the target of the attacker through its objective, $f_{\phi}(a)$. 
Since the attacker's target is often not known to the defender, having different attacker types gives the defender the possibility to distinguish between different targets while using our sensor attack model, and also incorporate the case of no attacker being present (Remark~\ref{rem:DummyAttackersMakesSense}).

The assumption that the attacker will not get any payoff when detected (Assumption~\ref{assum:AttackerGoal}) is a strong assumption on both the attacker and defender, which is mostly beneficial for the defender.
However, if we consider critical infrastructures, such as the power grid, an operator has to mitigate the attack quickly when detected to prevent harm.
Furthermore, we can also imagine that the attacker has made a significant investment to obtain its system knowledge and infiltrate the system. 
Hence, the attacker wants to remain undetected in order to not risk losing its investment. 
This kind of attacker has similarities to an advanced persistent threat, which is an attacker with knowledge about the system and that targets specific parts of the system while remaining stealthy (see, for example, \cite{SecureSensorDesignAgainstAPTBasar}).
It is important to point out that due to the random choice of $\tau$ (Assumption~\ref{assum:IIDSwitchingPeriod}) it is more difficult for the attacker to remain stealthy but at the same time obtain a large payoff.

The defender will rarely know the intentions of the attacker. 
To obtain information about potential targets of the attack, the defender can conduct a risk assessment \cite{GuideForRiskAssessment} of the system.
By conducting a risk assessment, the defender determines the vulnerabilities in its system, the likelihood of an attacker exploiting a vulnerability, and the potential impact of a successful attack. 
A vulnerability could be an unsafe region in the system's state space, e.g., the overpressure region for a tank, such that the attacker's objective would be to bring the system into this unsafe region.
The different vulnerabilities can be interpreted as attacker types $\phi$, the prior $\pi_{\phi}$ as the likelihood of an attacker exploiting a vulnerability, and the impacts are reflected by the attacker's payoff $f_{\phi}(a)$, which directly influence the defender's cost \eqref{eq:DefenderCostGivenAttackType}.
Hence, the defender's knowledge about possible attack objectives, their prior and their impact, as assumed in Assumption~\ref{assum:PlayerKnowledge}, can be interpreted as the outcome of a risk assessment conducted by the defender.
Therefore, we can interpret the Bayesian moving target defense framework as a tool to enhance the security for the defender similar to the ARMOR framework deployed at LAX \cite{ARMORFramework}, which makes use of the outcomes of the risk assessment.

\section{Problem Formulation}
Now we formulate the problem of finding a moving target defense strategy as a game between the defender and the attacker, where the defender's goal is to choose $\tau$ to minimize the expected value of \eqref{eq:DefenderCostGivenAttackType} with respect to the prior of the attacker types while the attacker chooses $a$ to maximize \eqref{eq:AttackerPayoffGivenAttackType}.
Due to Assumption~\ref{assum:AttackerKnowledge}, we can focus on the game over one period with a constant threshold.
This focus on one only period can also be interpreted as a repeated game with memoryless players, which has been considered in \cite{MTDWithMemorylessPlayers}.

The game has both \emph{imperfect} and \emph{incomplete} information. 
The information is imperfect because neither player observes the action taken by the other player. 
The information is incomplete because the defender does not know which type of attacker it faces. 
The defender believes that with probability $\pi_\phi$ it will play the game with an attacker of type $\phi$.
The imperfect information lets us interpret the game as a game with simultaneous moves, while the incomplete information results in a Bayesian game framework.
Therefore, we define the moving target defense game $\mathcal{M}=\langle \mathcal{P}, \mathcal{A}, \mathcal{T} ,\Pi, \mathcal{U} \rangle$, where $\mathcal{P}=\lbrace \mathrm{Defender}, \mathrm{Attacker}\rbrace$ is the set of players, $\mathcal{A}=\lbrace\tau_1,\ldots,\tau_m\rbrace\times \mathbb{R}^{Nn_y+n_D}$ is the action set, $\mathcal{T}=\lbrace 1 \rbrace\times\lbrace 1,\ldots,n_{\phi}\rbrace$ is the set of player types, $\Pi=\lbrace 1\rbrace\times \lbrace\pi_1,\ldots,\pi_{n_{\phi}}\rbrace$ is the prior, and $\mathcal{U}=\left(\mathfrak{c}(\tau, a | \phi ),\ \mathfrak{p}(\tau, a | \phi )\right)$ contains the cost and payoff functions of each player.   
For the analysis, we also define the game $\mathcal{M}_\phi=\langle \mathcal{P}, \mathcal{A}, \mathcal{U} \rangle$, where the defender is certain about the attacker type it faces, that is, $\pi_\phi=1$ for some $\phi\in\lbrace 1,\ldots,n_{\phi}\rbrace$ in $\mathcal{M}$.

The Bayesian game framework together with the simultaneous choice of actions lead us to the \emph{Bayesian Nash equilibrium} as the solution concept.
To define the Bayesian Nash equilibrium we introduce the (possibly mixed) strategies of the defender and attacker.
Let $\Delta_p$ be the set of probability distributions over the defender's actions. Then $p\in\Delta_p$ is a discrete probability distribution, where the $i$th element, $p_i$, is the probability that the defender chooses $\tau_i$.
For a given attacker type $\phi$, $\Delta_q(\phi)$ is the set of probability distributions over the attacker's action set. 
Since the attacker, to obtain a non-zero payoff, chooses a trajectory $a$ from $\mathbb{A}(\tau)$, which is typically not a discrete set, $q_{\phi}\in\Delta_q(\phi)$ may represent a continuous probability distribution.
We call both $p$ and $q_{\phi}$ a \emph{mixed} strategy, if it does \emph{not} concentrate the whole probability on one action. 
Otherwise, we call it a \textit{pure} strategy.

Since both the attacker and defender might use mixed strategies, we investigate the average cost of the defender 
\begin{align}
    \label{eq:AverageDefenderCostWithMixedStrategies}
    \bar{\mathfrak{c}}_\phi(p,q_{\phi})=\int\sum_{i=1}^m p_i \mathfrak{c}(\tau_i,a|\phi)q_\phi(a)da
\end{align}
and the average payoff of the attacker
\begin{align}
    \label{eq:AverageAttackerPayoffWithMixedStrategies}
    \bar{\mathfrak{p}}_\phi(p,q_{\phi})=\int\sum_{i=1}^m p_i \mathfrak{p}(\tau_i,a|\phi)q_\phi(a)da
\end{align}
for a given attacker type $\phi$.
Hence, \eqref{eq:AverageDefenderCostWithMixedStrategies} and \eqref{eq:AverageAttackerPayoffWithMixedStrategies} represent the average cost and payoff of the players in the game $\mathcal{M}_\phi$.

    A \textit{mixed strategy Bayesian Nash equilibrium}, $p^*\in\Delta_p$ and $q^*_{\phi}\in\Delta_q(\phi)$, fulfills 
    \begin{equation}  
    \label{eq:BayesNashEquilibriumGeneral}
    \begin{aligned}
    		\sum_{\phi^{\prime}=1}^{n_{\phi}}\pi_{\phi^{\prime}}\bar{\mathfrak{c}}_{\phi^{\prime}}(p^*,q^*_{\phi^{\prime}})&\leq \sum_{\phi^{\prime}=1}^{n_{\phi}}\pi_{\phi^{\prime}}\bar{\mathfrak{c}}_{\phi^{\prime}}(p,q^*_{{\phi^{\prime}}}),\\
    		\bar{\mathfrak{p}}_\phi(p^*,q^*_{\phi})&\geq \bar{\mathfrak{p}}_\phi(p^*,q_{\phi})
    \end{aligned}
    \end{equation}
     for all $p\in\Delta_p$, $q_{\phi}\in\Delta_q(\phi)$, and $\phi\in\lbrace1,\cdots,n_{\phi}\rbrace$.

In the Bayesian Nash equilibrium, a change from $p^*$ to another $p\in\Delta_p$ does not lead to a decrease in the cost for the defender, and, similarly, a change from $q_{\phi}^*$ to another $q_{\phi}\in\Delta_q(\phi)$ does not lead to an increase in payoff for an attacker of type $\phi$.
Hence, neither the defender nor the attacker want to deviate from their equilibrium strategies.
Here, the defender needs to consider all possible attacker types, which results in averaging of the costs of each game $\mathcal{M}_\phi$ over the prior, while the attacker needs to have an equilibrium strategy for each type.
This is because the attacker knows its own type, which the defender does not know, while the defender has only one type, which is known to both the defender and attacker.

Equipped with the definition of both the MTD and the Bayesian Nash equilibrium, we formulate the two problems we investigate in the remainder of this paper.
\begin{problem}
    \label{prob:ExistenceOfBayesNashEquilibrium}
    Characterize when a Bayesian Nash equilibrium \eqref{eq:BayesNashEquilibriumGeneral} representing a MTD (Definition~\ref{def:MovingTargetDefense}) exists.
\end{problem}
\begin{problem}
    \label{prob:DetermineBayesNashEquilibrium}
    If a Bayesian Nash equilibrium representing a MTD exists, compute an equilibrium strategy $p^*$.
\end{problem}
\section{Matrix Game Formulation}
\label{sec:MatrixGameFormulation}

Recall that the defender plays against one of $n_\phi$ adversaries, but it does not know which adversary it is facing.
Furthermore, while the defender has a finite set of actions, i.e., $\lbrace\tau_1,\cdots,\tau_m\rbrace$, the attacker's action set, $\mathbb{R}^{Nn_y+n_D}$, is a continuum. 
This makes finding Bayesian Nash equilibrium strategies challenging.
In this section, we will show that each game $\mathcal{M}_\phi$ can be reformulated into a strategically equivalent game $\widetilde{\mathcal{M}}_\phi$, where the attacker's action set is finite too.

We begin by recalling that for a given $\tau$ the attacker will only receive a non-zero payoff if $\lbrace a, x_D(0)\rbrace\in\mathbb{A}(\tau)$. 
Hence, for a given $\tau$ the attacker will always choose its attack trajectory such that $\lbrace a, x_D(0)\rbrace\in\mathbb{A}(\tau)$. 
Due to the discrete set of actions for the defender, we can separate the continuous action space of the attacker into $m+1$ sets in the game $\mathcal{M}_\phi$ as shown in Table~\ref{tab:MatrixGameDetectorTuningPre}. 
The set $\mathbb{A}(\tau_i)\setminus\mathbb{A}(\tau_{i-1})$ contains all attack trajectories that are stealthy for $\tau_i$ excluding the ones that are stealthy for $\tau_{i-1}$. 
Hence, if $\lbrace a, x_D(0)\rbrace\in\mathbb{A}(\tau_i)\setminus\mathbb{A}(\tau_{i-1})$ then the attack will be detected if the defender chooses $\tau_{i-1}$, but not if it chooses $\tau_i$.
We can remove the last column from Table~\ref{tab:MatrixGameDetectorTuningPre}, because $a\not\in\mathbb{A}(\tau_m)$ results in zero payoff for the attacker.

\begin{table*}
\centering
\caption{The game $\mathcal{M}_\phi$ with disjoint sets for the attacker's action}
\label{tab:MatrixGameDetectorTuningPre}
\begin{tabular}{|c|c|c|c|c|c|c|}
\hline 
 & $a_1\in\mathbb{A}(\tau_1)$ & $a_2\in\mathbb{A}(\tau_2)\setminus\mathbb{A}(\tau_1)$ & $\cdots$ & $a_m\in\mathbb{A}(\tau_m)\setminus\mathbb{A}(\tau_{m-1})$ & $a\not\in\mathbb{A}(\tau_m)$ \\ 
\hline 
$\tau_1$ & $\frac{c_\text{F}}{\tau_1}+f_{\phi}(a_1),\ f_{\phi}(a_1)$ & $\frac{c_\text{F}}{\tau_1},\ 0$ & $\cdots$ & $\frac{c_\text{F}}{\tau_1},\ 0$ & $\frac{c_\text{F}}{\tau_1},\ 0$ \\ 
\hline 
$\tau_2$ & $\frac{c_\text{F}}{\tau_2}+f_{\phi}(a_1),\ f_{\phi}(a_1)$ & $\frac{c_\text{F}}{\tau_2}+f_{\phi}(a_2),\ f_{\phi}(a_2)$ & $\cdots$ & $\frac{c_\text{F}}{\tau_2},\ 0$ & $\frac{c_\text{F}}{\tau_2},\ 0$ \\ 
\hline 
$\vdots$ & $\vdots$ & $\vdots$ & $\ddots$ & $\vdots$ & $\vdots$ \\ 
\hline 
$\tau_m$ & $\frac{c_\text{F}}{\tau_m}+f_{\phi}(a_1),\ f_{\phi}(a_1)$ & $\frac{c_\text{F}}{\tau_m}+f_{\phi}(a_2),\ f_{\phi}(a_2)$  & $\cdots$ & $\frac{c_\text{F}}{\tau_m}+f_{\phi}(a_m),\ f_{\phi}(a_m)$ & $\frac{c_\text{F}}{\tau_m},\ 0$ \\ 
\hline 
\end{tabular} 
\end{table*}

We define the maximum payoff for a given $\tau_i$ and a given attacker type $\phi$ as
\begin{align*}
 \mathcal{I}_i^\phi:=\max_{a,x_D(0)} \mathbf{1}_{\lbrace \lbrace a, x_D(0)\rbrace\in\mathbb{A}(\tau_i)\rbrace}f_\phi(a)=\max_{\lbrace a,x_D(0)\rbrace\in\mathbb{A}(\tau_i)} f_\phi(a).
\end{align*}
Note that we also optimize over $x_D(0)$, which the attacker has normally no influence over. 
We do that to obtain the maximum possible payoff an attacker could achieve, which goes along with the scenario of the worst-case attacker and the interpretation that the attacker waits for the optimal time to attack.
We can show the following for the maximum payoff.
\begin{lemma}
    \label{lem:ExistenceOfImpact}
    Consider an attacker of type $\phi$. For a given $\tau_i$, the maximum payoff exists and
    $\mathcal{I}_i^\phi:=\max_{\lbrace a,x_D(0)\rbrace\in\mathbb{A}(\tau_i)} f_\phi(a)
    =\max_{\lbrace a,x_D(0)\rbrace\in\mathbb{A}(\tau_i)\setminus\mathbb{A}(\tau_{i-1})}f_\phi(a).$
\end{lemma}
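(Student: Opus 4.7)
The plan is to establish the lemma in two stages: first the existence of the maximum, then the equality of the two max values.

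For \textbf{existence}, I would show that $\mathbb{A}(\tau_i)$ is compact, so that continuity of $f_\phi$ (Assumption~\ref{assum:PayoffFunctionOfAttacker}) together with the Weierstrass extreme value theorem delivers the maximum. Compactness follows from Assumption~\ref{assum:DetectorConditions}: closedness is immediate from the continuity of $d$ and $\theta$ (each constraint $d(x_D(k),a(k))\leq J_D(\tau_i)$ is a sublevel set of a continuous function). Boundedness is obtained by induction on $k\in\{0,\ldots,N-1\}$: the coercivity of $d$ in $(x_D(k),a(k))$ forces each pair $(x_D(k),a(k))$ to lie in a bounded set (otherwise $d$ would exceed $J_D(\tau_i)$), and then the continuity of $\theta$ propagates this boundedness to $x_D(k+1)$. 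Iterating, the joint trajectory $(x_D(0),a(0),x_D(1),a(1),\ldots,a(N-1))$ lies in a bounded set, so $\mathbb{A}(\tau_i)$ is compact.

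For the \textbf{equality}, I would split into two cases. If $f_\phi(a)=0$ for all $a$, both maxima equal $0$ and there is nothing to prove (the set $\mathbb{A}(\tau_i)\setminus\mathbb{A}(\tau_{i-1})$ is nonempty whenever $\tau_i>\tau_{i-1}$, by the strict inclusion in the Remark following the definition of $\mathbb{A}(\tau)$). Otherwise, Assumption~\ref{assum:PayoffFunctionOfAttacker} gives the strict inequality
\[
\max_{\{a,x_D(0)\}\in\mathbb{A}(\tau_{i-1})} f_\phi(a) < \max_{\{a,x_D(0)\}\in\mathbb{A}(\tau_i)} f_\phi(a).
\]
Since $\mathbb{A}(\tau_{i-1})\subset\mathbb{A}(\tau_i)$, any maximizer of $f_\phi$ over $\mathbb{A}(\tau_i)$ attaining the right-hand side value cannot lie in $\mathbb{A}(\tau_{i-1})$; it must therefore belong to $\mathbb{A}(\tau_i)\setminus\mathbb{A}(\tau_{i-1})$. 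This shows the ``$\leq$'' direction of the claimed equality; the ``$\geq$'' direction is trivial from $\mathbb{A}(\tau_i)\setminus\mathbb{A}(\tau_{i-1})\subseteq\mathbb{A}(\tau_i)$. The boundary case $i=1$ is handled by the convention $\mathbb{A}(\tau_0)=\emptyset$, under which the equality is immediate.

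The main obstacle I anticipate is the compactness argument, because the constraints couple the attack sequence $a(0),\ldots,a(N-1)$ through the recursion $x_D(k+1)=\theta(x_D(k),a(k))$. The coercivity assumption is pointwise in $(x_D(k),a(k))$, not in the full trajectory, so the inductive bootstrap above (coercivity bounds $(x_D(k),a(k))$, then continuity of $\theta$ bounds $x_D(k+1)$) is the essential ingredient; once this is in place, the rest of the lemma is a short consequence of Assumption~\ref{assum:PayoffFunctionOfAttacker}.
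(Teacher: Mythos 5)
Your proof is correct and follows essentially the same route as the paper: compactness of $\mathbb{A}(\tau_i)$ from conditions 1) and 2) of Assumption~\ref{assum:DetectorConditions} plus the extreme value theorem for existence, and the strict monotonicity in Assumption~\ref{assum:PayoffFunctionOfAttacker} for the equality. The only difference is that the paper outsources the compactness claim to a cited theorem, whereas you carry out the closedness/boundedness bootstrap explicitly (and correctly), and you spell out the maximizer-location argument that the paper compresses into ``readily follows.''
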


\begin{proof}
    We begin by proving the first part of the lemma.
    The first and second condition in Assumption~\ref{assum:DetectorConditions} guarantee that $\mathbb{A}(\tau_i)$ is a compact set for a given $\tau_i$ (see Theorem~7.1 in \cite{UmsonstLic}). 
    Hence, by the extreme value theorem, we know that $\mathcal{I}^\phi_i$ always exists for a given $\tau_i$ and $\phi$.
    The second part of the lemma readily follows from Assumption~\ref{assum:PayoffFunctionOfAttacker}.
\end{proof}
With the maximum payoff for a given $\tau_i$, we can formulate a finite matrix game $\widetilde{\mathcal{M}}_{\phi}=\langle \mathcal{P}, \widetilde{\mathcal{A}}_{\phi}, \widetilde{\mathcal{U}}_{\phi}\rangle$, where $\mathcal{P}$ is defined as in $\mathcal{M}$, $\widetilde{\mathcal{A}}_{\phi}=\lbrace\tau_1,\ldots,\tau_m\rbrace\times\lbrace\mathcal{I}_1^{\phi},\ldots,\mathcal{I}_m^{\phi}\rbrace$, and $\widetilde{\mathcal{U}}_{\phi}=(\frac{c_\text{F}}{\tau_i}+\mathbf{1}_{\lbrace j\leq i\rbrace}\mathcal{I}^\phi_j,\mathbf{1}_{\lbrace j\leq i\rbrace}\mathcal{I}^\phi_j)$, where $i,j\in\lbrace 1,\ldots,m\rbrace$.
Since both the attacker and the defender have finite actions sets in $\widetilde{M}_{\phi}$, we formulate $\widetilde{M}_{\phi}$ as the matrix game shown in Table~\ref{tab:MatrixGameDetectorTuning}.
Further, we define the $m\times m$ matrix $\Omega(\phi)$ as the defender's cost matrix with elements $\Omega_{i,j}(\phi)=\frac{c_\text{F}}{\tau_i}+\mathbf{1}_{\lbrace j\leq i\rbrace}\mathcal{I}^\phi_j$, and $\Upsilon(\phi)$ as the $m\times m$ matrix that has the attacker's payoff matrix with elements
$\Upsilon_{i,j}(\phi)=\mathbf{1}_{\lbrace j\leq i\rbrace}\mathcal{I}^\phi_j$.
\begin{proposition}
    \label{prop:EquivalentMatrixGame}
    The finite game $\widetilde{\mathcal{M}}_{\phi}$ in Table~\ref{tab:MatrixGameDetectorTuning} is strategically equivalent to the game $\mathcal{M}_\phi$ in Table~\ref{tab:MatrixGameDetectorTuningPre}.
\end{proposition}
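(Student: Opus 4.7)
The plan is to reduce the attacker's continuous action space in $\mathcal{M}_\phi$ to the finite one of $\widetilde{\mathcal{M}}_\phi$ by exploiting the layered structure induced by the inclusions $\mathbb{A}(\tau_1)\subset\cdots\subset\mathbb{A}(\tau_m)$, and then show that this reduction preserves both players' expected payoffs. The central observation is that \eqref{eq:AverageDefenderCostWithMixedStrategies} and \eqref{eq:AverageAttackerPayoffWithMixedStrategies} depend on any attack trajectory $a$ only through which slice $\mathcal{S}_j:=\mathbb{A}(\tau_j)\setminus\mathbb{A}(\tau_{j-1})$ (with $\mathbb{A}(\tau_0):=\emptyset$) it lies in, together with the realized payoff $f_\phi(a)$ on that slice.

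First, I partition the attacker's action space into $\mathcal{S}_1,\ldots,\mathcal{S}_m$ and the complement $\mathcal{S}_{m+1}$ of $\mathbb{A}(\tau_m)$, and for any $q_\phi\in\Delta_q(\phi)$ set $\alpha_j:=\int_{\mathcal{S}_j}q_\phi(a)\,da$. Using $\mathfrak{p}(\tau_i,a|\phi)=f_\phi(a)\mathbf{1}_{\{a\in\mathcal{S}_j,\,j\leq i\}}$, expression \eqref{eq:AverageAttackerPayoffWithMixedStrategies} rewrites as
\begin{equation*}
\bar{\mathfrak{p}}_\phi(p,q_\phi)=\sum_{j=1}^{m}\left(\int_{\mathcal{S}_j}f_\phi(a)q_\phi(a)\,da\right)\sum_{i=j}^{m}p_i.
\end{equation*}
By Lemma~\ref{lem:ExistenceOfImpact}, $f_\phi(a)\leq\mathcal{I}_j^\phi$ on $\mathcal{S}_j$, with the bound attained at some maximizer $a_j^*\in\mathcal{S}_j$; hence each inner integral is at most $\alpha_j\mathcal{I}_j^\phi$ and the bound is attained when the conditional distribution of $q_\phi$ on $\mathcal{S}_j$ concentrates on $a_j^*$. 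An analogous decomposition applies to \eqref{eq:AverageDefenderCostWithMixedStrategies}, which differs only by the fixed false-alarm term $\sum_i p_i c_\text{F}/\tau_i$.

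Next, I define the correspondence $\Phi$ sending each $q_\phi$ to the finite-support strategy $\tilde{q}_\phi$ in $\widetilde{\mathcal{M}}_\phi$ with $\tilde{q}_j=\alpha_j$ for $j\in\{1,\ldots,m\}$, and reassign the residual mass $\alpha_{m+1}$ arbitrarily (since attacks in $\mathcal{S}_{m+1}$ contribute zero payoff, any reassignment weakly improves the attacker's payoff). A direct calculation using the definitions of $\Omega(\phi)$ and $\Upsilon(\phi)$ gives $p^\top\Upsilon(\phi)\tilde{q}_\phi=\sum_{j=1}^{m}\alpha_j\mathcal{I}_j^\phi\sum_{i=j}^{m}p_i$ and $p^\top\Omega(\phi)\tilde{q}_\phi=\sum_{i=1}^{m}p_i c_\text{F}/\tau_i+p^\top\Upsilon(\phi)\tilde{q}_\phi$, which match the decompositions above exactly whenever $q_\phi$ concentrates on maximizers within each slice. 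Thus $\Phi$ preserves both players' expected payoffs on this (optimal) subclass of strategies, best responses map to best responses in both directions, and the two games are strategically equivalent.

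The main obstacle is the direction from $\mathcal{M}_\phi$ to $\widetilde{\mathcal{M}}_\phi$: I must argue that restricting the attacker from an arbitrary distribution over $\mathbb{R}^{Nn_y+n_D}$ to the finite support $\{a_1^*,\ldots,a_m^*\}$ does not eliminate any equilibrium strategy. This follows because any mass spread within a slice $\mathcal{S}_j$, or placed on $\mathcal{S}_{m+1}$, is weakly dominated by concentrating that mass at some $a_j^*$ whose existence is guaranteed by Lemma~\ref{lem:ExistenceOfImpact} and Assumption~\ref{assum:PayoffFunctionOfAttacker}. Care must also be taken that the slices $\mathcal{S}_j$ are measurable so that $\alpha_j$ is well-defined, which follows from the continuity assumptions on $\theta$ and $d$ in Assumption~\ref{assum:DetectorConditions}.
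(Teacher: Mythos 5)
Your proof is correct and follows essentially the same route as the paper's: both reduce the attacker's continuous action set by partitioning it into the slices $\mathbb{A}(\tau_j)\setminus\mathbb{A}(\tau_{j-1})$ and invoking Lemma~\ref{lem:ExistenceOfImpact} to concentrate all mass within each slice on a payoff maximizer. The paper's proof is only three sentences long, so your version simply makes explicit the decomposition of \eqref{eq:AverageDefenderCostWithMixedStrategies}--\eqref{eq:AverageAttackerPayoffWithMixedStrategies} and the weak-dominance argument that the paper leaves implicit.
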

\begin{proof}
    Since the attacker's objective is to maximize its payoff \eqref{eq:AttackerPayoffGivenAttackType}, it always chooses the trajectory that maximizes its payoff. 
    From Lemma~\ref{lem:ExistenceOfImpact} we know there exists a maximum payoff trajectory for each of the columns in Table~\ref{tab:MatrixGameDetectorTuningPre}.
    Hence, choosing the maximum payoff is strategically equivalent to choosing an attack trajectory that yields the maximum payoff.
\end{proof}

\begin{table}[h]
\centering
\caption{Finite $m \times m$ matrix game $\widetilde{\mathcal{M}}_{\phi}$ for threshold switching}
\label{tab:MatrixGameDetectorTuning}
\begin{tabular}{|c|c|c|c|c|c|c|}
\hline 
 & $\mathcal{I}_1^\phi$ &$\mathcal{I}_2^\phi$ & $\cdots$ & $\mathcal{I}_m^\phi$  \\ 
\hline 
$\tau_1$ & $\frac{c_\text{F}}{\tau_1}+\mathcal{I}_1^\phi,\ \mathcal{I}_1^\phi$ & $\frac{c_\text{F}}{\tau_1},\ 0$ & $\cdots$ & $\frac{c_\text{F}}{\tau_1},\ 0$ \\ 
\hline 
$\tau_2$ & $\frac{c_\text{F}}{\tau_2}+\mathcal{I}_1^\phi,\ \mathcal{I}_1^\phi$ & $\frac{c_\text{F}}{\tau_2}+\mathcal{I}_2^\phi,\ \mathcal{I}_2^\phi$ & $\cdots$ & $\frac{c_\text{F}}{\tau_2},\ 0$ \\ 
\hline 
$\vdots$ & $\vdots$ & $\vdots$ & $\ddots$ & $\vdots$ \\ 
\hline 
$\tau_m$ & $\frac{c_\text{F}}{\tau_m}+\mathcal{I}_1^\phi,\ \mathcal{I}_1^\phi$ & $\frac{c_\text{F}}{\tau_m}+\mathcal{I}_2^\phi,\ \mathcal{I}_2^\phi$  & $\cdots$ & $\frac{c_\text{F}}{\tau_m}+\mathcal{I}_m^\phi,\ \mathcal{I}_m^\phi$ \\ 
\hline 
\end{tabular} 
\end{table}

By using the equivalent game in Table~\ref{tab:MatrixGameDetectorTuning}, we can simplify the average cost functions, \eqref{eq:AverageDefenderCostWithMixedStrategies} and \eqref{eq:AverageAttackerPayoffWithMixedStrategies} of the game $\mathcal{M}_\phi$, used in the Bayesian Nash equilibrium \eqref{eq:BayesNashEquilibriumGeneral} to bilinear functions of $p$ and $q_{\phi}$, which helps us to solve both Problem~\ref{prob:ExistenceOfBayesNashEquilibrium} and Problem~\ref{prob:DetermineBayesNashEquilibrium}.
\begin{corollary}
In the strategically equivalent finite game $\widetilde{\mathcal{M}}_{\phi}$, the average cost of the defender is given by
\begin{align*}
    \bar{\mathfrak{c}}_{\phi}(p,q_\phi)=p^T\Omega(\phi)q_{\phi}
\end{align*}
and the average payoff of the attacker is given by
\begin{align*}
    \bar{\mathfrak{p}}_{\phi}(p,q_\phi)=p^T\Upsilon(\phi)q_{\phi}
\end{align*}
for each attacker type, where the $i$th element, $q_{\phi,i}$, of $q_{\phi}$ is the probability of choosing an attack trajectory that leads to the maximum payoff $\mathcal{I}_{i}^{\phi}$.
\end{corollary}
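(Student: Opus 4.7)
The plan is to read off the two bilinear identities directly from the payoff matrix in Table~\ref{tab:MatrixGameDetectorTuning}, using the strategic equivalence already established in Proposition~\ref{prop:EquivalentMatrixGame}. First I would note that in $\widetilde{\mathcal{M}}_{\phi}$ the attacker's action set is finite: its $m$ pure actions correspond to the attack trajectories (and initial detector states) that realize the maxima $\mathcal{I}_1^{\phi},\ldots,\mathcal{I}_m^{\phi}$ guaranteed to exist by Lemma~\ref{lem:ExistenceOfImpact}. Consequently, a mixed strategy $q_{\phi}$ collapses from a (possibly continuous) distribution over $\mathbb{R}^{Nn_y+n_D}$ to a discrete probability vector in $\mathbb{R}^m$, whose $j$th entry $q_{\phi,j}$ is the probability of picking the maximizer associated with column $j$.

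Next I would expand the expected values by linearity. Conditioning on the pure action pair $(\tau_i,\mathcal{I}_j^{\phi})$, the defender's realized cost is $\frac{c_{\text{F}}}{\tau_i}+\mathbf{1}_{\{j\le i\}}\mathcal{I}_j^{\phi}=\Omega_{i,j}(\phi)$ and the attacker's realized payoff is $\mathbf{1}_{\{j\le i\}}\mathcal{I}_j^{\phi}=\Upsilon_{i,j}(\phi)$, both of which are the $(i,j)$ entries of the matrices defined right before the proposition. Because $\tau$ and the attack trajectory are drawn independently according to $p$ and $q_{\phi}$, the joint probability of the pair is $p_i q_{\phi,j}$, so
\begin{align*}
\bar{\mathfrak{c}}_{\phi}(p,q_\phi) &= \sum_{i=1}^m\sum_{j=1}^m p_i\,\Omega_{i,j}(\phi)\,q_{\phi,j} = p^T\Omega(\phi)q_{\phi},\\
\bar{\mathfrak{p}}_{\phi}(p,q_\phi) &= \sum_{i=1}^m\sum_{j=1}^m p_i\,\Upsilon_{i,j}(\phi)\,q_{\phi,j} = p^T\Upsilon(\phi)q_{\phi}.
\end{align*}

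The only subtlety, and the step I would flag as the main conceptual (though not technical) obstacle, is justifying that the integral expressions \eqref{eq:AverageDefenderCostWithMixedStrategies}--\eqref{eq:AverageAttackerPayoffWithMixedStrategies} of the original game $\mathcal{M}_\phi$ coincide with these finite sums. This is exactly the content of the strategic equivalence in Proposition~\ref{prop:EquivalentMatrixGame}: any attack trajectory within $\mathbb{A}(\tau_j)\setminus\mathbb{A}(\tau_{j-1})$ that does not attain the maximum $\mathcal{I}_j^{\phi}$ is weakly dominated (for the attacker) by the maximizer in the same column, and yields strictly lower defender cost only through the column component, which is already captured by $\mathcal{I}_j^{\phi}$. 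Hence restricting $q_{\phi}$ to the $m$ maximizers incurs no loss for the bilinear representation, and the corollary follows.
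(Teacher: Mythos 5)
Your proposal is correct and follows essentially the same route as the paper: the paper's proof simply observes that the finiteness of the attacker's action set in $\widetilde{\mathcal{M}}_{\phi}$ makes $q_{\phi}$ a discrete probability vector, so the expectations become bilinear forms, which is exactly the expansion you carry out in detail. Your added discussion of why the integral expressions reduce to the finite sums is just a restatement of the already-established Proposition~\ref{prop:EquivalentMatrixGame}, not a new ingredient.
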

\begin{proof}
    Since the attacker has a finite set of actions in $\widetilde{\mathcal{M}}_{\phi}$, its mixed strategy $q_{\phi}$ is a discrete probability distribution. 
    This leads directly to bilinear functions of $p$ and $q_{\phi}$ for the average cost and payoff, respectively.
\end{proof}
\section{Bayesian Nash Equilibrium-based MTD}
\label{sec:BayesianNashEquilibrium}
In the previous section, we showed that for any particular $\phi$ the corresponding game $\mathcal{M}_\phi$ is strategically equivalent to a finite matrix game $\widetilde{\mathcal{M}}_{\phi}$.
This means that the Bayesian game $\mathcal{M}$ is strategically equivalent to a finite Bayesian game, denoted by $\widetilde{\mathcal{M}}$ and its equilibria can be found by formulating an induced matrix game \cite{EssentialsOfGameTheory}, obtained by combining the games $\widetilde{\mathcal{M}}_{\phi}$ with respect to the prior. In what follows, we first illustrate the procedure and we then use the induced game to give
a necessary and sufficient condition for the existence of a Bayesian Nash equilibrium that is a moving target defense according to Definition~\ref{def:MovingTargetDefense}. 

\subsection{An illustrative example}
We start with an illustrative example, where each player has two actions to choose from. 
The attacker is assumed to have type~1 with probability $\pi_1$ and type~2 with probability $\pi_2=1-\pi_1$.
Hence, the finite game $\widetilde{\mathcal{M}}_\phi$ corresponding to attacker type $\phi$ is as shown in Table~\ref{tab:MatrixGameExample2by2}, where $\phi\in\lbrace 1, 2\rbrace$.

\begin{table}[ht]
\centering
\caption{$2 \times 2$ matrix game example}
\label{tab:MatrixGameExample2by2}
\begin{tabular}{|c|c|c|c|c|c|c|}
\hline 
 & $\mathcal{I}_1^\phi$ &$\mathcal{I}_2^\phi$  \\ 
\hline 
$\tau_1$ & $\frac{c_\text{F}}{\tau_1}+\mathcal{I}_1^\phi,\ \mathcal{I}_1^\phi$ & $\frac{c_\text{F}}{\tau_1},\ 0$ \\ 
\hline 
$\tau_2$ & $\frac{c_\text{F}}{\tau_2}+\mathcal{I}_1^\phi,\ \mathcal{I}_1^\phi$ & $\frac{c_\text{F}}{\tau_2}+\mathcal{I}_2^\phi,\ \mathcal{I}_2^\phi$ \\
\hline
\end{tabular} 
\end{table}

To find the Bayesian Nash equilibrium, we can formulate an induced matrix game (see \cite{EssentialsOfGameTheory}) and find the Nash equilibria of that induced matrix game, which correspond to the Bayesian Nash equilibria of the original game $\mathcal{M}$.
In the induced game, we combine the matrix games $\widetilde{\mathcal{M}}_1$ and $\widetilde{\mathcal{M}}_2$ into one game. 
The actions of the defender in the induced game are the same as in the games $\widetilde{\mathcal{M}}_1$ and $\widetilde{\mathcal{M}}_2$, that is, it can choose $\tau_1$ or $\tau_2$ as its action. 
The attacker, however, has the actions $\mathcal{I}_{i_1}^1\mathcal{I}_{i_2}^2$, where $i_1$ and $i_{2}$ are in $\lbrace 1,2\rbrace$. 
Hence, the attacker in the induced game is a combination of the attackers in the games $\widetilde{\mathcal{M}}_1$ and $\widetilde{\mathcal{M}}_2$ and its payoff is the expected value over the attacker types given the defender's prior $[\pi_1,\,\pi_2]$. 
The induced game is illustrated in Table~\ref{tab:InducedMatrixGame2by2}.
If the attacker chooses $\mathcal{I}_{i_1}^1\mathcal{I}_{i_2}^2$ in the induced game, then in $\widetilde{\mathcal{M}}_1$ the action of the attacker is its $i_1$th action, i.e., $\mathcal{I}_{i_1}^1$, and in $\widetilde{\mathcal{M}}_2$ the action of the attacker is its $i_2$th action, i.e., $\mathcal{I}_{i_2}^2$.
\begin{table*}[!ht]
\centering
\caption{Induced matrix game  for the $2 \times 2$ game example}
\label{tab:InducedMatrixGame2by2}
\begin{tabular}{|c|c|c|c|c|c|c|}
\hline 
 & $\mathcal{I}_1^1\mathcal{I}_1^2$ &$\mathcal{I}_1^1\mathcal{I}_2^2$ & $\mathcal{I}_2^1\mathcal{I}_1^2$ & $\mathcal{I}_2^1\mathcal{I}_2^2$ \\ 
\hline 
$\tau_1$ & $\frac{c_\text{F}}{\tau_1}+\pi_1\mathcal{I}_1^1+\pi_2\mathcal{I}_1^2,\pi_1\mathcal{I}_1^1+\pi_2\mathcal{I}_1^2$ & $\frac{c_\text{F}}{\tau_1}+\pi_1\mathcal{I}_1^1,\pi_1\mathcal{I}_1^1$ & $\frac{c_\text{F}}{\tau_1}+\pi_2\mathcal{I}_1^2,\pi_2\mathcal{I}_1^2$ & $\frac{c_\text{F}}{\tau_1},0$ \\ 
\hline 
$\tau_2$ & $\frac{c_\text{F}}{\tau_2}+\pi_1\mathcal{I}_1^1+\pi_2\mathcal{I}_1^2,\pi_1\mathcal{I}_1^1+\pi_2\mathcal{I}_1^2$ & $\frac{c_\text{F}}{\tau_2}+\pi_1\mathcal{I}_1^1+\pi_2\mathcal{I}_2^2,\pi_1\mathcal{I}_1^1+\pi_2\mathcal{I}_2^2$ & $\frac{c_\text{F}}{\tau_2}+\pi_1\mathcal{I}_2^1+\pi_2\mathcal{I}_1^2,\pi_1\mathcal{I}_2^1+\pi_2\mathcal{I}_1^2$ & $\frac{c_\text{F}}{\tau_2}+\pi_1\mathcal{I}_2^1+\pi_2\mathcal{I}_2^2,\pi_1\mathcal{I}_2^1+\pi_2\mathcal{I}_2^2$ \\ 
\hline 
\end{tabular} 
\end{table*}
From Table~\ref{tab:InducedMatrixGame2by2}, we observe that the defender prefers $\tau_2$ over $\tau_1$ if the following conditions hold
\begin{align*}
    \frac{c_\text{F}}{\tau_1}+\pi_1\mathcal{I}_1^1+\pi_2\mathcal{I}_1^2&>\frac{c_\text{F}}{\tau_2}+\pi_1\mathcal{I}_1^1+\pi_2\mathcal{I}_1^2,\\
     \frac{c_\text{F}}{\tau_1}+\pi_1\mathcal{I}_1^1&>\frac{c_\text{F}}{\tau_2}+\pi_1\mathcal{I}_1^1+\pi_2\mathcal{I}_2^2,\\
     \frac{c_\text{F}}{\tau_1}+\pi_2\mathcal{I}_1^2&>\frac{c_\text{F}}{\tau_2}+\pi_1\mathcal{I}_2^1+\pi_2\mathcal{I}_1^2,\\
    \frac{c_\text{F}}{\tau_1}&>\frac{c_\text{F}}{\tau_2}+\pi_1\mathcal{I}_2^1+\pi_2\mathcal{I}_2^2,
\end{align*}
which are equivalent to
\begin{align*}
    &\frac{c_\text{F}}{\tau_1}>\frac{c_\text{F}}{\tau_2},\\
    & \frac{c_\text{F}}{\tau_1}>\frac{c_\text{F}}{\tau_2}+\pi_2\mathcal{I}_2^2,\\
    & \frac{c_\text{F}}{\tau_1}>\frac{c_\text{F}}{\tau_2}+\pi_1\mathcal{I}_2^1,\\
    & \frac{c_\text{F}}{\tau_1}>\frac{c_\text{F}}{\tau_2}+\pi_1\mathcal{I}_2^1+\pi_2\mathcal{I}_2^2.
\end{align*}
Note that the first inequality always holds, while the second and third inequalities hold if the last inequality holds.

Hence, we see that the defender prefers to play $\tau_2$ over playing $\tau_1$ if  $\frac{c_\text{F}}{\tau_1}>\frac{c_\text{F}}{\tau_2}+\pi_1\mathcal{I}_2^1+\pi_2\mathcal{I}_2^2$.
In this case, the defender will play $\tau_2$ independent of the attacker's action, such that the attacker will always play the action that maximizes its payoff, i.e., $\mathcal{I}_2^1\mathcal{I}_2^2$. 
Therefore, there exists only a pure Bayesian Nash equilibrium strategy, which is \emph{not} a moving target defense.
For this simple example, we determined a sufficient condition for when an MTD does not exist.
However, this is a simple example where the induced matrix game has a managable size and we can calculate the Bayesian Nash equilibrium by hand.
Assume now that the defender has $m>2$ actions, while the attacker has $m$ actions and $n_\phi>1$ types.
Then the induced matrix game is an $m \times m^{n_{\phi}}$ matrix game, whose size becomes unmanagable as either $m$, $n_{\phi}$, or both, grow.

\subsection{Best responses and strictly dominated actions}
In the induced matrix game, the actions of the attacker are $\mathcal{I}_{i_1}^1\mathcal{I}_{i_2}^2\cdots\mathcal{I}_{i_{n_\phi}}^{n_\phi}$, where $i_\phi\in\lbrace1,\ldots, m\rbrace$ and $\phi\in\lbrace1,\ldots,n_{\phi}\rbrace$, while the defender chooses $\tau_l$. This leads to the attacker payoff
\begin{align*}
    \mathfrak{p}^{\mathrm{ind}}(\tau_l,\mathcal{I}_{i_1}^1\mathcal{I}_{i_2}^2\cdots\mathcal{I}_{i_{n_\phi}}^{n_\phi})=\sum_{j=1}^{n_\phi}\mathbf{1}_{\lbrace i_j\leq l\rbrace}\pi_j\mathcal{I}_{i_j}^j
\end{align*}
and the defender cost
\begin{align*}
    \mathfrak{c}^{\mathrm{ind}}(\tau_l,\mathcal{I}_{i_1}^1\mathcal{I}_{i_2}^2\cdots\mathcal{I}_{i_{n_\phi}}^{n_\phi})=\frac{c_\text{F}}{\tau_l}+\sum_{j=1}^{n_\phi}\mathbf{1}_{\lbrace i_j\leq l\rbrace}\pi_j\mathcal{I}_{i_j}^j
\end{align*}
in the induced matrix game, which we can use to characterize the best responses of the players.
\begin{lemma}
\label{lem:BestResponses}
    The best response of the attacker to a given action $\tau_l$ is
    \begin{align}
        \label{eq:AttackerBestResponseInducedMatrixGame}
        b_\text{A}(\tau_l)=\lbrace\mathcal{I}_{l}^1\mathcal{I}_{l}^2\cdots\mathcal{I}_{l}^{n_\phi}\rbrace,
    \end{align}
    and the best response of the defender to a given action $\mathcal{I}_{i_1}^1\mathcal{I}_{i_2}^2\cdots\mathcal{I}_{i_{n_\phi}}^{n_\phi}$ is
    \begin{small}
    \begin{align}
        \label{eq:DefenderBestResponseInducedMatrixGame}
        \resizebox{0.99\hsize}{!}{
        $b_\text{D}(\mathcal{I}_{i_1}^1\mathcal{I}_{i_2}^2\cdots\mathcal{I}_{i_{n_\phi}}^{n_\phi})=\big\lbrace \tau_l| l\in\arg\min_{l\in\lbrace 1,\cdots,m\rbrace}\frac{c_\text{F}}{\tau_l}+\sum_{j=1}^{n_\phi}\mathbf{1}_{\lbrace i_j\leq l\rbrace}\pi_j\mathcal{I}_{i_j}^j\big\rbrace$%
        }.
    \end{align}
    \end{small}
\end{lemma}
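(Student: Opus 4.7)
The plan is to exploit the additive (over attacker types) structure of the induced-game payoffs that was derived just before the lemma statement. For the attacker part, I would fix the defender's action $\tau_l$ and note that
\[
\mathfrak{p}^{\mathrm{ind}}(\tau_l,\mathcal{I}_{i_1}^1\mathcal{I}_{i_2}^2\cdots\mathcal{I}_{i_{n_\phi}}^{n_\phi})=\sum_{j=1}^{n_\phi}\mathbf{1}_{\lbrace i_j\leq l\rbrace}\pi_j\mathcal{I}_{i_j}^j
\]
splits into $n_\phi$ independent terms, the $j$th of which depends only on the single coordinate $i_j$. Hence the attacker can maximize each $i_j\in\lbrace 1,\ldots,m\rbrace$ separately, which reduces the problem to $n_\phi$ scalar maximizations.

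For each fixed $j$, I would combine Lemma~\ref{lem:ExistenceOfImpact} with Assumption~\ref{assum:PayoffFunctionOfAttacker} to conclude that the sequence $\mathcal{I}_1^j,\ldots,\mathcal{I}_m^j$ is non-decreasing, and strictly increasing whenever $f_j\not\equiv 0$. Choosing $i_j>l$ makes the indicator vanish and yields zero contribution, while choosing $i_j\leq l$ yields the non-negative value $\pi_j\mathcal{I}_{i_j}^j$, which is maximized at $i_j=l$ by monotonicity. Thus $i_j=l$ is (weakly) optimal for every type $j$, and taking the product of these optimal coordinates gives \eqref{eq:AttackerBestResponseInducedMatrixGame}. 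The defender part I would handle by appealing directly to the definition of a best response: given a fixed attacker action $\mathcal{I}_{i_1}^1\cdots\mathcal{I}_{i_{n_\phi}}^{n_\phi}$, the defender's induced-game cost is precisely the expression $\frac{c_\text{F}}{\tau_l}+\sum_{j=1}^{n_\phi}\mathbf{1}_{\lbrace i_j\leq l\rbrace}\pi_j\mathcal{I}_{i_j}^j$, so its set of best responses is by definition the $\arg\min$ over $l$, i.e.\ \eqref{eq:DefenderBestResponseInducedMatrixGame}.

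No substantial obstacle arises; the argument reduces to (i) the separability of the attacker's payoff across types and (ii) the monotonicity of $\mathcal{I}_i^\phi$ in $i$ inherited from Assumption~\ref{assum:PayoffFunctionOfAttacker}. The only mild care needed is the treatment of the dummy zero-payoff types of Remark~\ref{rem:DummyAttackersMakesSense}: for such a type every $i_j$ yields the same (zero) contribution, so the attacker's best response is not unique, but $\mathcal{I}_l^j$ is still among the maximizers, and the singleton set displayed in \eqref{eq:AttackerBestResponseInducedMatrixGame} should therefore be read as a representative best response rather than an exhaustive one.
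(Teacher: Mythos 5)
Your proposal is correct and follows essentially the same route as the paper: separate the attacker's induced payoff type by type, use the monotonicity of $\mathcal{I}_i^\phi$ in $i$ to conclude $i_j=l$ is optimal for each type, and read off the defender's best response directly as the $\arg\min$ of its induced cost. Your remark about non-uniqueness for zero-payoff types is a fair (and slightly more careful) caveat than the paper's claim of a unique best response, but it does not change the substance of the argument.
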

\begin{proof}
    We start by investigating the best response of the attacker. 
    For a given $\tau_l$, the payoff $\mathcal{I}_{i_j}^j$ of type $j$ with $i_j>l$ is zero since it is detected. 
    Hence, $i_j\leq l$ needs to be fulfilled for each attacker type if it wants to get a payoff. 
    Recall that $\mathcal{I}_{i}^j<\mathcal{I}_{\eta}^j$ for all $i<\eta$. 
    Hence, to obtain the maximum payoff for a given $\tau_l$, the attacker has a unique best response given by \eqref{eq:AttackerBestResponseInducedMatrixGame} in the induced matrix game.
    
    For a given attack action, the defender's best response is to choose $\tau_l$ to minimize its cost, which results in the set of best responses given in \eqref{eq:DefenderBestResponseInducedMatrixGame}.
\end{proof}
While Lemma~\ref{lem:BestResponses} provides the unique best response of the attacker, the defender might have several best responses.
For example, it could be best to choose $\tau_m$ for a given attacker action to minimize the cost for false alarms.
By choosing a smaller $\tau$, even though it increases the false alarm cost, it makes more attacks detectable, which in turn decreases the attack cost.
Hence, the best response depends on many factors.

Now that we looked at best responses, we will investigate when actions are strictly dominated in the induced matrix game. 
For the defender, an action $\tau_l$ strictly dominates $\tau_{\eta}$ if the cost for $\tau_l$ is strictly lower than the cost for $\tau_{\eta}$ for all possible actions of the attacker. 
Strict dominance of one attacker action over another can be defined similarly.
With the best responses and the strictly dominated actions, we are then equipped to prove the existence of moving target defenses according to Definition~\ref{def:MovingTargetDefense}.
Recall that by eliminating strictly dominated actions, we do not change the set of the Nash equilibria of the induced game and, therefore, neither the Bayesian Nash equilibria of the original game $\mathcal{M}$.

\begin{lemma}
    \label{lem:DominatedStrategiesInducedMatrixGame}
    Assume that there exists $l>{\eta}$ such that
    \begin{align}
        \label{eq:DominatedStrategiesInducedMatrixGame}
        \frac{c_\text{F}}{\tau_{\eta}}>\frac{c_\text{F}}{\tau_l}+\sum_{j=1}^{n_{\phi}}\pi_j\mathcal{I}_{l}^j,
    \end{align}
    then we can eliminate the defender actions $\tau_1,\ldots,\tau_{\eta}$ and the attacker actions for which  $\mathcal{I}_{i_1}^1\mathcal{I}_{i_2}^2\cdots\mathcal{I}_{i_{n_\phi}}^{n_\phi}\neq \mathcal{I}_{{\eta}+1}^1\mathcal{I}_{{\eta}+1}^2\cdots\mathcal{I}_{{\eta}+1}^{n_\phi}$ holds for $i_j\in\lbrace1,\cdots,{\eta}+1\rbrace$ ,
    and obtain a reduced $(m-{\eta})\times (m^{n_{\phi}}-({\eta}+1)^{n_{\phi}}+1)$ induced matrix game.
\end{lemma}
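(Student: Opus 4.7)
The plan is to prove the lemma by iterated elimination of strictly dominated actions in the induced matrix game, first eliminating the defender rows $\tau_1,\ldots,\tau_\eta$ and then, in the reduced game, eliminating the attacker columns lying in $\{1,\ldots,\eta+1\}^{n_\phi}\setminus\{(\eta+1,\ldots,\eta+1)\}$. Since the paragraph preceding the lemma recalls that eliminating strictly dominated actions preserves the Nash equilibrium set of the induced game, this reduction is valid.

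First I would show that $\tau_l$ strictly dominates every $\tau_s$ with $s\leq\eta$. Fixing an arbitrary attacker action $\mathcal{I}_{i_1}^1\cdots\mathcal{I}_{i_{n_\phi}}^{n_\phi}$, the difference between the defender's costs at $\tau_s$ and $\tau_l$ is
\[
\Bigl(\tfrac{c_\text{F}}{\tau_s}-\tfrac{c_\text{F}}{\tau_l}\Bigr)-\sum_{j:\,s<i_j\leq l}\pi_j\mathcal{I}_{i_j}^j,
\]
because $\mathbf 1_{\{i_j\leq s\}}-\mathbf 1_{\{i_j\leq l\}}$ equals $-1$ exactly on the index set $\{j:s<i_j\leq l\}$ and vanishes elsewhere. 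Using $\tau_s\leq\tau_\eta$ to get $c_\text{F}/\tau_s\geq c_\text{F}/\tau_\eta$, together with the monotonicity in Assumption~\ref{assum:PayoffFunctionOfAttacker} giving $\mathcal{I}_{i_j}^j\leq\mathcal{I}_l^j$ whenever $i_j\leq l$, this quantity is bounded below by $c_\text{F}/\tau_\eta-c_\text{F}/\tau_l-\sum_{j=1}^{n_\phi}\pi_j\mathcal{I}_l^j$, which is strictly positive by hypothesis \eqref{eq:DominatedStrategiesInducedMatrixGame}. Hence $\tau_1,\ldots,\tau_\eta$ are all strictly dominated by $\tau_l$ and can be removed, leaving $m-\eta$ defender rows.

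Next I would restrict to this reduced game and examine the attacker actions with all $i_j\leq\eta+1$. Because every surviving defender action has $l\geq\eta+1\geq i_j$, the attacker's payoff against such an action is the $l$-independent value $\sum_j\pi_j\mathcal{I}_{i_j}^j$. Comparing to the column for $(\eta+1,\ldots,\eta+1)$, whose (also constant) payoff is $\sum_j\pi_j\mathcal{I}_{\eta+1}^j$, Assumption~\ref{assum:PayoffFunctionOfAttacker} yields $\mathcal{I}_{i_j}^j\leq\mathcal{I}_{\eta+1}^j$ with strict inequality whenever $i_j<\eta+1$ and type $j$ is non-trivial, so $(\eta+1,\ldots,\eta+1)$ strictly dominates each of the other $(\eta+1)^{n_\phi}-1$ attacker columns. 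Eliminating them leaves $m^{n_\phi}-(\eta+1)^{n_\phi}+1$ attacker actions, yielding the claimed $(m-\eta)\times(m^{n_\phi}-(\eta+1)^{n_\phi}+1)$ reduced game.

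The main obstacle I anticipate is the edge case of a trivial attacker type with $f_\phi\equiv 0$ (allowed by Assumption~\ref{assum:PayoffFunctionOfAttacker} and used to model an absent attacker, cf.\ Remark~\ref{rem:DummyAttackersMakesSense}). For such a type $\mathcal{I}_i^j=0$ for all $i$, so changing its coordinate does not alter the column, and the dominance is only weak. These columns are numerically identical to that of $(\eta+1,\ldots,\eta+1)$ in the reduced game, so they are payoff-redundant and can be merged with it without changing the equilibrium set, which rescues the counting and the elimination argument.
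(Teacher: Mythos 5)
Your proof is correct and follows essentially the same route as the paper's: you first show $\tau_l$ strictly dominates every row $\tau_1,\ldots,\tau_\eta$ (your single lower bound $\sum_{j:\,s<i_j\leq l}\pi_j\mathcal{I}_{i_j}^j\leq\sum_{j}\pi_j\mathcal{I}_l^j$ just consolidates the paper's three-case analysis of the same inequality) and then eliminate the attacker columns dominated by $\mathcal{I}_{\eta+1}^1\cdots\mathcal{I}_{\eta+1}^{n_\phi}$. Your final remark is actually a refinement over the paper: for a type with $f_\phi\equiv 0$ the affected columns are identical rather than strictly dominated, a case the paper's proof silently ignores, and your merge-of-duplicate-columns argument is the correct way to preserve the stated count and the equilibrium set.
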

\begin{proof}
We start with the strict dominance of the rows. 
First note that $\tau_l$ strictly dominates $\tau_{\eta}$ if
\begin{align}
    \label{eq:CompleteConditionForStrictDominance}
    \frac{c_\text{F}}{\tau_{\eta}}+\sum_{j=1}^{n_\phi}\mathbf{1}_{\lbrace i_j\leq {\eta}\rbrace}\pi_j\mathcal{I}_{i_j}^j>\frac{c_\text{F}}{\tau_l}+\sum_{j=1}^{n_\phi}\mathbf{1}_{\lbrace i_j\leq l\rbrace}\pi_j\mathcal{I}_{i_j}^j
\end{align}
holds for all possible attacker actions $\mathcal{I}_{i_1}^1\mathcal{I}_{i_2}^2\cdots\mathcal{I}_{i_{n_\phi}}^{n_\phi}$.
We can split this condition into three cases:
\begin{enumerate}
    \item All attacker types use attacks that are stealthy for $\tau_{\eta}$, i.e., actions $\mathcal{I}_{i_1}^1\mathcal{I}_{i_2}^2\cdots\mathcal{I}_{i_{n_\phi}}^{n_\phi}$ with $\sum_{j=1}^{n_\phi}\mathbf{1}_{\lbrace i_j\leq {\eta}\rbrace}=n_{\phi}$.
    
    Here, we see that the terms related to the attacker payoff on both sides of the inequality in \eqref{eq:CompleteConditionForStrictDominance} are the same, such that \eqref{eq:CompleteConditionForStrictDominance} simplifies to $\frac{c_\text{F}}{\tau_{\eta}}<\frac{c_\text{F}}{\tau_l}$. 
    Hence, ${\eta}<l$ is necessary for the strict dominance of $\tau_l$ over $\tau_{\eta}$.
    
    \item All attacker types use attacks that will be detected for $\tau_l$, i.e., actions $\mathcal{I}_{i_1}^1\mathcal{I}_{i_2}^2\cdots\mathcal{I}_{i_{n_\phi}}^{n_\phi}$ with $\sum_{j=1}^{n_\phi}\mathbf{1}_{\lbrace i_j>l\rbrace }=n_{\phi}$.
    
    Since the attacks are detected by $\tau_l$, they will also be detected by $\tau_{\eta}$ such that the terms related to the attacker payoff on both sides of the inequality in \eqref{eq:CompleteConditionForStrictDominance} disappear. 
    Therefore, \eqref{eq:CompleteConditionForStrictDominance} simplifies again to $\frac{c_\text{F}}{\tau_{\eta}}<\frac{c_\text{F}}{\tau_l}$.
    
    \item There exists at least one attacker type that uses a strategy that is stealthy for $\tau_l$ but not for $\tau_{\eta}$, i.e., actions $\mathcal{I}_{i_1}^1\mathcal{I}_{i_2}^2\cdots\mathcal{I}_{i_{n_\phi}}^{n_\phi}$ with $\sum_{j=1}^{n_\phi}\mathbf{1}_{\lbrace {\eta}+1\leq i_j\leq l\rbrace}>0$.
    
    Subtracting the terms related to the attacker payoff on the left side of \eqref{eq:CompleteConditionForStrictDominance} from the inequality itself leads to
    \begin{align}
        \label{eq:ThirdCaseOfDominanceCondition}
        \frac{c_\text{F}}{\tau_{\eta}}>\frac{c_\text{F}}{\tau_l}+\sum_{j=1}^{n_\phi}\mathbf{1}_{\lbrace {\eta}+1\leq i_j\leq l\rbrace}\pi_j\mathcal{I}_{i_j}^j.
    \end{align}
    \end{enumerate}

The first two cases show that we need $\tau_l>\tau_{\eta}$ or equivalently $l>{\eta}$ for $\tau_l$ to strictly dominate $\tau_{\eta}$. For the third case, 
since 
\begin{align*}
\frac{c_\text{F}}{\tau_l}+\sum_{j=1}^{n_\phi}\pi_j\mathcal{I}_{l}^j\geq \frac{c_\text{F}}{\tau_l}+\sum_{j=1}^{n_\phi}\mathbf{1}_{\lbrace {\eta}+1\leq i_j\leq l\rbrace }\pi_j\mathcal{I}_{i_j}^j
\end{align*}
is always correct, \eqref{eq:ThirdCaseOfDominanceCondition} holds if \eqref{eq:DominatedStrategiesInducedMatrixGame} holds.

Further, since the following inequalities
\begin{align*}
    \frac{c_\text{F}}{\tau_\nu}+\sum_{j=1}^{n_\phi}\mathbf{1}_{\lbrace i_j\leq \nu\rbrace }\pi_j\mathcal{I}_{i_j}^j>\frac{c_\text{F}}{\tau_\nu}>\frac{c_\text{F}}{\tau_{\eta}}>\frac{c_\text{F}}{\tau_l}+\sum_{j=1}^{n_\phi}\pi_j\mathcal{I}_{l}^j
\end{align*}
hold for all $i_j\in\lbrace1,\ldots, m\rbrace$ and $j\in\lbrace1,\ldots,n_\phi\rbrace$,
we see that if \eqref{eq:DominatedStrategiesInducedMatrixGame} holds $\tau_l$ does not only strictly dominate $\tau_{\eta}$, but all $\tau_{\nu}$ with $\nu\in\lbrace1,\ldots, {\eta}\rbrace$.

Therefore, if \eqref{eq:DominatedStrategiesInducedMatrixGame} holds we can remove the first ${\eta}$ rows of the induced matrix game.
With the first ${\eta}$ rows removed it follows that $\mathcal{I}_{{\eta}+1}^1\mathcal{I}_{{\eta}+1}^2\cdots\mathcal{I}_{{\eta}+1}^{n_\phi}$ strictly dominates all actions where the attacker of type $j$ chooses $i_j\in\lbrace1,\cdots,{\eta}+1\rbrace$ with $j\in\lbrace1,\ldots,n_{\phi}\rbrace$ such that ${\sum_{j=1}^{n_{\phi}}\mathbf{1}_{\lbrace i_{j}={\eta}+1\rbrace}\neq n_\phi}$ holds, i.e., all actions $\mathcal{I}_{i_1}^1\mathcal{I}_{i_2}^2\cdots\mathcal{I}_{i_{n_\phi}}^{n_\phi}\neq \mathcal{I}_{{\eta}+1}^1\mathcal{I}_{{\eta}+1}^2\cdots\mathcal{I}_{{\eta}+1}^{n_\phi}$ for $i_j\in\lbrace1,\cdots,{\eta}+1\rbrace$.
Hence, we can additionally remove the $({\eta}+1)^{n_\phi}-1$ columns corresponding to these actions to obtain the reduced matrix game.
\end{proof}

\subsection{Existence of a MTD strategy (Problem~\ref{prob:ExistenceOfBayesNashEquilibrium})}
We now formulate a necessary and sufficient condition for the existence of a MTD strategy for the defender in the Bayesian game $\mathcal{M}$ according to Definition~\ref{def:MovingTargetDefense}.
\begin{theorem}
    \label{thm:NecessaryAndSufficientConditionForBayesNashEquilibrium}
    A moving target defense strategy exists if, and only if,
    \begin{align}
        \label{eq:NecessaryAndSufficientConditionForBayesNashEquilibrium}
        \frac{c_\text{F}}{\tau_{m-1}}\leq \frac{c_\text{F}}{\tau_m}+\sum_{j=1}^{n_\phi}\pi_j\mathcal{I}_{m}^j.
    \end{align}
\end{theorem}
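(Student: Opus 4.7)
The proof plan is to analyze the induced matrix game (as in the illustrative example of Table~\ref{tab:InducedMatrixGame2by2}) and combine the dominance result of Lemma~\ref{lem:DominatedStrategiesInducedMatrixGame} with the best-response characterization of Lemma~\ref{lem:BestResponses}. Necessity follows by contraposition: suppose \eqref{eq:NecessaryAndSufficientConditionForBayesNashEquilibrium} is violated, i.e., $\frac{c_\text{F}}{\tau_{m-1}}>\frac{c_\text{F}}{\tau_m}+\sum_{j=1}^{n_\phi}\pi_j\mathcal{I}_m^j$. Applying Lemma~\ref{lem:DominatedStrategiesInducedMatrixGame} with $l=m$ and $\eta=m-1$ shows that every $\tau_k$ with $k<m$ is strictly dominated by $\tau_m$. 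Since iterated elimination of strictly dominated strategies preserves the Nash equilibria of the induced game, and thereby the Bayesian Nash equilibria of $\mathcal{M}$, every equilibrium must place probability one on $\tau_m$, contradicting Definition~\ref{def:MovingTargetDefense}.

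For sufficiency I first classify all pure Bayesian Nash equilibria. By Lemma~\ref{lem:BestResponses}, a pure defender action $\tau_l$ induces the unique attacker response $\mathcal{I}_l^1\cdots\mathcal{I}_l^{n_\phi}$; if $l<m$, switching the defender from $\tau_l$ to $\tau_m$ strictly reduces the cost, because $\frac{c_\text{F}}{\tau_m}<\frac{c_\text{F}}{\tau_l}$ while the attacker-induced term remains $\sum_j\pi_j\mathcal{I}_l^j$. Thus the only candidate pure equilibrium is $(\tau_m,(m,\ldots,m))$, and a direct comparison with $\tau_{m-1}$ shows it is a BNE if and only if $\frac{c_\text{F}}{\tau_{m-1}}\geq\frac{c_\text{F}}{\tau_m}+\sum_j\pi_j\mathcal{I}_m^j$. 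Now assume \eqref{eq:NecessaryAndSufficientConditionForBayesNashEquilibrium} holds. If the inequality is strict, no pure BNE exists; Nash's existence theorem applied to the finite induced matrix game produces a BNE, and this BNE must be mixed in the defender, for otherwise Lemma~\ref{lem:BestResponses} would force a pure attacker response, yielding a pure BNE — a contradiction. Hence an MTD exists.

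The equality case requires an explicit construction, since the pure profile $(\tau_m,(m,\ldots,m))$ remains a BNE and Nash's theorem alone does not produce the desired mixed defender equilibrium. Define $\rho:=\max_{j:\mathcal{I}_m^j>0}\mathcal{I}_{m-1}^j/\mathcal{I}_m^j$, which satisfies $\rho<1$ by Assumption~\ref{assum:PayoffFunctionOfAttacker}. Choose any $p_m\in[\rho,1)$, let $p_{m-1}=1-p_m$ and the remaining $p_i=0$, and let each attacker type $j$ play its pure action $\mathcal{I}_m^j$. The attacker best-responds since $p_m\mathcal{I}_m^j\geq\mathcal{I}_{m-1}^j$ for every relevant type; meanwhile, against $(m,\ldots,m)$ the defender's costs at $\tau_{m-1}$ and $\tau_m$ coincide by the equality and strictly exceed the cost at any $\tau_k$ with $k<m-1$, so any mixture on $\lbrace\tau_{m-1},\tau_m\rbrace$ is optimal. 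Since $p_m,p_{m-1}\in(0,1)$, this is an MTD. The main obstacle is precisely this equality case: one must pick $p_m$ large enough that the attacker strictly prefers the maximal action for each type, while simultaneously retaining defender indifference between $\tau_{m-1}$ and $\tau_m$.
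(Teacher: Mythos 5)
Your proof is correct and follows essentially the same route as the paper's: necessity via Lemma~\ref{lem:DominatedStrategiesInducedMatrixGame} with $l=m$, $\eta=m-1$; sufficiency by using Lemma~\ref{lem:BestResponses} to show the only candidate pure equilibrium is $(\tau_m,\mathcal{I}_m^1\cdots\mathcal{I}_m^{n_\phi})$ and invoking Nash existence plus the uniqueness of the attacker's best response when the inequality is strict; and an explicit mixture on $\lbrace\tau_{m-1},\tau_m\rbrace$ in the equality case, where your termwise threshold $\rho=\max_{j}\mathcal{I}_{m-1}^j/\mathcal{I}_m^j$ upper-bounds the paper's weighted-ratio threshold and is equally valid. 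One slip to fix: against $\mathcal{I}_m^1\cdots\mathcal{I}_m^{n_\phi}$ the defender's common cost at $\tau_{m-1}$ and $\tau_m$ is strictly \emph{below} (not above) the cost $c_\text{F}/\tau_k$ at any $\tau_k$ with $k<m-1$; with that direction corrected, your conclusion that any mixture on $\lbrace\tau_{m-1},\tau_m\rbrace$ is a best response goes through.
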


\begin{proof}
    First, assume that \eqref{eq:NecessaryAndSufficientConditionForBayesNashEquilibrium} does not hold. Then \eqref{eq:DominatedStrategiesInducedMatrixGame} is fulfilled with $l=m$ and ${\eta}=m-1$. Hence, we can reduce the induced matrix game to a $1\times 1$ matrix game (see Lemma~\ref{lem:DominatedStrategiesInducedMatrixGame}), which has a pure strategy equilibrium.
    Therefore, the original Bayesian game, $\mathcal{M}$, has a unique and pure Bayesian Nash equilibrium, such that no MTD strategy exists.

    Next, we show that there exists at least one Nash equilibrium where the defender plays an MTD strategy if \eqref{eq:NecessaryAndSufficientConditionForBayesNashEquilibrium} holds. 
    Since the induced matrix game is a finite matrix game, we know that there exists at least one Nash equilibrium and equivalently at least one Bayesian Nash equilibrium for the original game, $\mathcal{M}$.
    For the Nash equilibrium to be a pure strategy Nash equilibrium, each player needs to play a best response to the other player's best response.
    Assume that $(\tau_l,\mathcal{I}_{i_1}^1\mathcal{I}_{i_2}^2\cdots\mathcal{I}_{i_{n_\phi}}^{n_\phi})$ is a pure strategy Nash equilibrium, then according to Lemma~\ref{lem:BestResponses} the following needs to be fulfilled
	\begin{align*}
       \mathcal{I}_{i_1}^1\mathcal{I}_{i_2}^2\cdots\mathcal{I}_{i_{n_\phi}}^{n_\phi}&\in  b_\text{A}(\tau_l),\\
       \tau_l&\in  b_\text{D}(\mathcal{I}_{i_1}^1\mathcal{I}_{i_2}^2\cdots\mathcal{I}_{i_{n_\phi}}^{n_\phi}),
    \end{align*}    
    i.e., each player's action is a best response to the other player's best response. 
    Comparing the first equation with the attacker's best response \eqref{eq:AttackerBestResponseInducedMatrixGame}, we see that in a pure Nash equilibrium $i_1=i_2=\cdots=i_{n_\phi}=l$.
    With \eqref{eq:DefenderBestResponseInducedMatrixGame}, we determine that the best response of the defender is
    \begin{align}
        \label{eq:BestResponsesInExistenceProof}
        \resizebox{0.99\hsize}{!}{
        $b_\text{D}(\mathcal{I}_{l}^1\mathcal{I}_{l}^2\cdots\mathcal{I}_{l}^{n_\phi})=\begin{cases}\lbrace \tau_{l-1} \rbrace &\mathrm{if}\  \frac{c_\text{F}}{\tau_{l-1}}< \frac{c_\text{F}}{\tau_m}+\sum_{j=1}^{n_\phi}\pi_j\mathcal{I}_{l}^j,\\
        \lbrace\tau_{l-1},\tau_{m}\rbrace &\mathrm{if}\  \frac{c_\text{F}}{\tau_{l-1}}= \frac{c_\text{F}}{\tau_m}+\sum_{j=1}^{n_\phi}\pi_j\mathcal{I}_{l}^j,\\
        \lbrace \tau_m\rbrace &\mathrm{otherwise}.
        \end{cases}$%
        }
    \end{align}
    To have a pure Nash equilibrium we need $l=m$. We observe that there cannot be a pure Nash equilibrium if \eqref{eq:NecessaryAndSufficientConditionForBayesNashEquilibrium} holds with inequality such that all equilibria are moving target defenses.
    However, if \eqref{eq:NecessaryAndSufficientConditionForBayesNashEquilibrium} holds with equality, the best response of the defender to $\mathcal{I}_{m}^1\mathcal{I}_{m}^2\cdots\mathcal{I}_{m}^{n_\phi}$ can be both $\tau_{m-1}$ and $\tau_{m}$. 
    Hence, in this case there exists a pure strategy Nash equilibrium in the induced matrix game.
    Next, we show that a moving target defense equilibrium strategy exists as well in this case.
    
    First, note that if \eqref{eq:NecessaryAndSufficientConditionForBayesNashEquilibrium} holds with equality then $\tau_i$ is strictly dominated by $\tau_m$ for all $i\in\lbrace1,\ldots,m-2\rbrace$, such that we can reduce the induced matrix game to a  $2\times (m^{n_{\phi}}-(m-1)^{n_{\phi}}+1)$ matrix game.
    Further, from \eqref{eq:BestResponsesInExistenceProof} we see that any distribution over $\tau_{m-1}$ and $\tau_m$ is a best response to the attack strategy $\mathcal{I}_m^1\mathcal{I}_{m}^2\cdots\mathcal{I}_m^{n_\phi}$.
    If we can show that $\mathcal{I}_m^1\mathcal{I}_{m}^2\cdots\mathcal{I}_m^{n_\phi}$ is also a best response to at least one distribution over $\tau_{m-1}$ and $\tau_m$ then we have found a Bayesian Nash equilibrium, which fulfills Definition~\ref{def:MovingTargetDefense}.
    By multiplying the attacker's payoff matrix in the reduced matrix game from the left with the distribution over $\tau_{m-1}$ and $\tau_m$, we determine that the expected payoff for playing $\mathcal{I}_{i_1}^1\mathcal{I}_{i_2}^2\cdots\mathcal{I}_{i_{\phi}}^{n_{\phi}}$ is 
    \begin{align*}
        \sum_{j=1}^{n_\phi}\pi_j\mathbf{1}_{\lbrace i_j\leq m-1\rbrace}\mathcal{I}_{i_j}^{j}+p_{m}\sum_{j=1}^{n_\phi}\pi_j\mathbf{1}_{\lbrace i_j=m\rbrace}\mathcal{I}_{m}^{j},
    \end{align*}
    where $i_j\in\lbrace 1,\ldots,m\rbrace$ for all $j\in\lbrace 1,\ldots,n_{\phi}\rbrace$ and $p_{m}$ is the probability of choosing $\tau_m$.
    Note that $\mathcal{I}_m^1\mathcal{I}_{m}^2\cdots\mathcal{I}_m^{n_\phi}$ is a best response to the mixed strategy of the defender, if the expected payoff for choosing $\mathcal{I}_m^1\mathcal{I}_{m}^2\cdots\mathcal{I}_m^{n_\phi}$ is greater than or equal to all other expected payoffs the attacker could receive, i.e., 
    \begin{align*}
        \sum_{j=1}^{n_\phi}\pi_j\mathbf{1}_{\lbrace i_j\leq m-1\rbrace}\mathcal{I}_{i_j}^{j}+p_{m}\sum_{j=1}^{n_\phi}\pi_j\mathbf{1}_{\lbrace i_j=m\rbrace}\mathcal{I}_{m}^{j}\leq p_{m}\sum_{j=1}^{n_\phi}\pi_j\mathcal{I}_{m}^{j}
    \end{align*}
    for all $i_j$ such that $\sum_{j=1}^{n_{\phi}}\mathbf{1}_{\lbrace i_j=m\rbrace}<n_{\phi}$.
    Hence, the attacker prefers to play $\mathcal{I}_m^1\mathcal{I}_{m}^2\cdots\mathcal{I}_m^{n_\phi}$ if the defender chooses 
    \begin{align*}
        p_m\in\left[\max_{i_1,\ldots,i_{n_{\phi}}}\frac{\sum_{j=1}^{n_{\phi}}\mathbf{1}_{\lbrace i_j=m-1\rbrace}\pi_j\mathcal{I}^j_{m-1}}{\sum_{j=1}^{n_{\phi}}\mathbf{1}_{\lbrace i_j = m-1\rbrace}\pi_j\mathcal{I}^j_{m}},1\right),
    \end{align*}
    where we used that $\mathcal{I}_{i_j}<\mathcal{I}_{m-1}$ for all $i_j<m-1$.
    This shows us that there there are infinitely many Nash equilibria in the induced matrix game where the defender uses a MTD strategy according to Definition~\ref{def:MovingTargetDefense} if \eqref{eq:NecessaryAndSufficientConditionForBayesNashEquilibrium} holds with equality.
    
    Therefore, we conclude that a moving target defense strategy according to Definition~\ref{def:MovingTargetDefense} exists if, and only if, \eqref{eq:NecessaryAndSufficientConditionForBayesNashEquilibrium} holds.
\end{proof}

\begin{remark}
\label{rmk:OneAttackerType}
If $\pi_i=1$ for some $i$, i.e., we have only one attacker type, the condition in Theorem~\ref{thm:NecessaryAndSufficientConditionForBayesNashEquilibrium} simplifies to the condition for the existence of a MTD from \cite{UmsonstCDC20}.
\end{remark}
\begin{remark}
In case \eqref{eq:NecessaryAndSufficientConditionForBayesNashEquilibrium} holds with equality, we can introduce an attacker type that has zero payoff as mentioned in Remark~\ref{rem:DummyAttackersMakesSense} with a prior of $\epsilon>0$ and subtract $\epsilon$ from one of the priors $\pi_j$. 
This will lead to \eqref{eq:NecessaryAndSufficientConditionForBayesNashEquilibrium} holding with inequality.
Hence, \eqref{eq:NecessaryAndSufficientConditionForBayesNashEquilibrium} can always be turned into an inequality by an arbitrarily small change in the priors.
\end{remark}

\section{Computing a MTD strategy (Problem~\ref{prob:DetermineBayesNashEquilibrium})}
In this section, we look into computing a MTD strategy. First, we investigate the general case and formulate a linear program to compute MTD strategies.
Second, we investigate the special case $n_{\phi}=1$ and provide a closed-form solution for computing a MTD strategy. 
\subsection{General case}
Finding Nash equilibria of a finite matrix game leads to a bilinear optimization problem as shown in \cite{DynamicNoncooperativeGameTheory}.
For Bayesian Nash equilibria, we can adopt the optimization problem in \cite{SolutionForBayesianGames} to obtain the following bilinear optimization problem
\begin{equation}
    \label{eq:BilinearOptForMTD}
    \begin{aligned}
        &\min_{p,q_\phi,\bar{\mathfrak{c}},\bar{\mathfrak{p}}(\phi)} \resizebox{0.8\hsize}{!}{%
        $p^T\left(\sum_{\phi=1}^{n_{\phi}}\pi_\phi\left(\Omega(\phi)-\Upsilon(\phi)\right)q_\phi\right)+\bar{\mathfrak{c}}-\sum_{\phi=1}^{n_{\phi}}\pi_\phi \bar{\mathfrak{p}}(\phi)$%
        }\\
        &\begin{aligned}
             \mathrm{s.t.}\quad &p^T1_m=1,\ q_\phi^T1_m=1,\ p\geq 0,\ q_\phi \geq 0,\\
            &\sum_{\phi=1}^{n_\phi}\pi_\phi\Omega(\phi) q_\phi \geq -\bar{\mathfrak{c}}1_m,\\ &-\Upsilon^T(\phi)p\geq \bar{\mathfrak{p}}(\phi)1_m,\ \phi\in\lbrace 1,\ldots,n_{\phi}\rbrace.
        \end{aligned}
    \end{aligned}
\end{equation}

Recall that the elements of the matrices $\Omega(\phi)$ and $\Upsilon(\phi)$ are $\Omega_{i,j}(\phi)=\frac{c_\text{F}}{\tau_i}+\mathbf{1}_{\lbrace j\leq i\rbrace}\mathcal{I}^\phi_j$ and $\Upsilon_{i,j}(\phi)=\mathbf{1}_{\lbrace j\leq i\rbrace}\mathcal{I}^\phi_j$, respectively.
Here, $q_\phi$ is the mixed strategy for the attacker with type $\phi$ and $\bar{\mathfrak{p}}(\phi)$ is its average payoff, while $p$ is the mixed strategy of the defender and $\bar{\mathfrak{c}}$ is its average cost.

\begin{proposition}
\label{prop:LinearProgForBayesMTD}
Assume that the condition of Theorem~\ref{thm:NecessaryAndSufficientConditionForBayesNashEquilibrium} holds, and thus a MTD exists. A MTD strategy can then be computed by solving the linear program,
\begin{equation}
\label{eq:LinearOptForBayesMTD}
    \begin{aligned}
        &\min_{p,q_\phi,\bar{\mathfrak{c}},\bar{\mathfrak{p}}(\phi)} p^T\gamma+\bar{\mathfrak{c}}-\sum_{\phi=1}^{n_{\phi}}\pi_\phi \bar{\mathfrak{p}}(\phi)\\
        &
        \begin{aligned}
            \mathrm{s.t.}\quad &p^T1_m=1,\ q_\phi^T1_m=1,\ p\geq 0,\ q_\phi\geq 0,\\
            &\gamma+\sum_{\phi=1}^{n_\phi}\pi_\phi\Upsilon(\phi) q_\phi\geq -\bar{\mathfrak{c}}1_m,\\ &-\Upsilon^T(\phi)p\geq \bar{\mathfrak{p}}(\phi)1_m,\ \phi\in\lbrace 1,\ldots,n_{\phi}\rbrace,
        \end{aligned}
    \end{aligned}
\end{equation}
where the $l$th element of the $m$-dimensional vector $\gamma$ is $\frac{c_\text{F}}{\tau_l}$.
\end{proposition}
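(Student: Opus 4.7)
The plan is to reduce the bilinear program \eqref{eq:BilinearOptForMTD}, which characterizes the Bayesian Nash equilibria, to the linear program \eqref{eq:LinearOptForBayesMTD} by exploiting a rank-one structure of $\Omega(\phi)-\Upsilon(\phi)$, and then to conclude by invoking Theorem \ref{thm:NecessaryAndSufficientConditionForBayesNashEquilibrium}.

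First I would observe, directly from the entrywise definitions $\Omega_{i,j}(\phi)=\frac{c_\text{F}}{\tau_i}+\mathbf{1}_{\{j\leq i\}}\mathcal{I}_j^\phi$ and $\Upsilon_{i,j}(\phi)=\mathbf{1}_{\{j\leq i\}}\mathcal{I}_j^\phi$, that their difference satisfies $\Omega(\phi)-\Upsilon(\phi)=\gamma\,1_m^T$, independent of $\phi$, where $\gamma$ has entries $\gamma_l=\frac{c_\text{F}}{\tau_l}$. This identity is the key structural fact: the false-alarm part of the defender's cost has no dependence on the attacker's action column $j$. Because $q_\phi^T 1_m=1$, it follows that $(\Omega(\phi)-\Upsilon(\phi))q_\phi=\gamma(1_m^T q_\phi)=\gamma$, and summing over $\phi$ weighted by the prior with $\sum_\phi\pi_\phi=1$ yields $\sum_{\phi=1}^{n_\phi}\pi_\phi(\Omega(\phi)-\Upsilon(\phi))q_\phi=\gamma$.

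Second, I would substitute this identity into \eqref{eq:BilinearOptForMTD}. In the objective, the bilinear cross term $p^T\sum_\phi\pi_\phi(\Omega(\phi)-\Upsilon(\phi))q_\phi$ collapses to the linear term $p^T\gamma$. In the constraint $\sum_\phi\pi_\phi\Omega(\phi)q_\phi\geq-\bar{\mathfrak{c}}1_m$, writing $\Omega(\phi)=\Upsilon(\phi)+\gamma 1_m^T$ rewrites it exactly as $\gamma+\sum_\phi\pi_\phi\Upsilon(\phi)q_\phi\geq-\bar{\mathfrak{c}}1_m$. All remaining variables and constraints are untouched, so the feasible set and optimal value are preserved, while the program is now linear in $(p,q_\phi,\bar{\mathfrak{c}},\bar{\mathfrak{p}}(\phi))$. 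This matches \eqref{eq:LinearOptForBayesMTD} term for term.

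Third, by Theorem \ref{thm:NecessaryAndSufficientConditionForBayesNashEquilibrium}, whenever \eqref{eq:NecessaryAndSufficientConditionForBayesNashEquilibrium} holds, the induced Bayesian game $\mathcal{M}$ admits an MTD equilibrium, and any optimizer of the (equivalent) bilinear program is a Bayesian Nash equilibrium, hence the defender component of any optimizer of \eqref{eq:LinearOptForBayesMTD} is an equilibrium strategy. The main subtlety to address is the boundary case in which \eqref{eq:NecessaryAndSufficientConditionForBayesNashEquilibrium} holds with equality, where a pure equilibrium coexists with infinitely many MTD equilibria and the LP need not automatically return the mixed one; here I would either argue via selection of an appropriate optimal vertex having mixed $p^*$, or appeal to the remark following Theorem \ref{thm:NecessaryAndSufficientConditionForBayesNashEquilibrium}, namely that an arbitrarily small perturbation of the prior turns the equality into a strict inequality, in which case every optimizer of \eqref{eq:LinearOptForBayesMTD} is an MTD by the theorem.
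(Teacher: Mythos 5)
Your proposal is correct and takes essentially the same route as the paper: both rest on the identity $\Omega(\phi)-\Upsilon(\phi)=\gamma 1_m^T$ (equivalently, $\Omega(\phi)q_\phi=\gamma+\Upsilon(\phi)q_\phi$ using $1_m^Tq_\phi=1$), substitute it into the objective and constraints of \eqref{eq:BilinearOptForMTD}, and use $\sum_{\phi}\pi_\phi=1$ to collapse the bilinear term to $p^T\gamma$. Your added discussion of the equality boundary case is a reasonable extra precaution that the paper's own proof does not spell out, but the core argument is identical.
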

\begin{proof}
Due to the special structure of $\Omega(\phi)$, we note that the $l$th element of $\Omega(\phi)q_\phi$ equals $\frac{c_{\text{F}}}{\tau_l}+\sum_{j=1}^l q_{\phi,j}\mathcal{I}_j^\phi$, such that $\Omega(\phi)q_\phi=\gamma+\Upsilon(\phi)q_\phi$.
Inserting that in the objective function and the constraints of \eqref{eq:BilinearOptForMTD} leads to the optimization problem in \eqref{eq:LinearOptForBayesMTD}, where we further used that $\sum_{\phi=1}^{n_{\phi}}\pi_\phi=1$.
\end{proof}
The optimization problem in \eqref{eq:LinearOptForBayesMTD} is a convex linear program and therefore, we are guaranteed to find the global optimum.
This has an advantage over directly solving \eqref{eq:BilinearOptForMTD}, where we may get stuck in a local optimum.
\subsection{Special case: $n_{\phi}=1$}
\label{sec:ClosedFormSolution}
Next, we provide a closed-form solution to  Problem~\ref{prob:DetermineBayesNashEquilibrium}, when the defender faces only one attacker type, i.e. $n_{\phi}=1$, which is the problem we mentioned in Remark~\ref{rmk:OneAttackerType}.
For ease of notation, we will omit the superscript for the attacker type.

For $n_{\phi}=1$ the matrix representation of the Bayesian Nash equilibrium definition in \eqref{eq:BayesNashEquilibriumGeneral} simplifies to the definition of the Nash equilibrium
\begin{equation}
    \label{eq:NashEquilibrium}
    \begin{aligned}
        (p^*)^T\Omega q^*&\leq p^T\Omega q^* &\forall p\in\Delta_p,\\
        (p^*)^T\Upsilon q^*&\geq (p^*)^T\Upsilon q &\forall q\in\Delta_q.
    \end{aligned}
\end{equation}

Let $\mathcal{Q}$ denote the support of the attacker's mixed strategy, i.e., if $i\in\mathcal{Q}$ then the attacker chooses $\mathcal{I}_i$ with a nonzero probability $q_i>0$ and if $i\not\in\mathcal{Q}$ then $q_i=0$. 
The support for the mixed strategy of the defender is defined in a similar way and is denoted by $\mathcal{P}$ with probabilities $p_i$.
In the following, we investigate one mixed strategy for the defender and one for the attacker and show how the support of the best response of the attacker, respectively defender, has to look like.
We then use this to define a mixed strategy Nash equilibrium, which represents a MTD.

\begin{lemma}
\label{lem:ClosedFormNashForAttacker}
If the attacker fixes $i, 1<i<m$, and uses the mixed strategy
\begin{align}
	\label{eq:ClosedFormNashForAttacker}
        q_{j}=\begin{cases}\frac{c_\text{F}}{\mathcal{I}_{j}}\left(\frac{1}{\tau_{j-1}}-\frac{1}{\tau_{j}}\right), &\mathrm{if}\ j\in\lbrace i+1,\ldots,m \rbrace,\\
        1-\sum_{l=i+1}^m q_l, &\mathrm{if}\ j=i,\\ 
        0, &\mathrm{otherwise},
        \end{cases},
    \end{align}
   where
    \begin{align}
    \label{eq:ConditionForQi}
        0\leq q_i<\max\left(1,\frac{c_{\text{F}}}{\mathcal{I}_i}\left(\frac{1}{\tau_{i-1}}-\frac{1}{\tau_i}\right)\right),
    \end{align}
     then $\mathcal{P}\subseteq\lbrace i,\ldots,m\rbrace$ needs to hold for the support of the defender's best response.
\end{lemma}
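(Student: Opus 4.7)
The plan is to compute, for every pure action $\tau_l$ of the defender, the expected cost $\bar{\mathfrak{c}}(\tau_l,q)=e_l^T\Omega(\phi)q$ induced by the attacker's mixed strategy $q$ in \eqref{eq:ClosedFormNashForAttacker}, and then compare these costs to identify which $\tau_l$ can possibly lie in the support of a best response. Because the best response set is exactly $\arg\min_l \bar{\mathfrak{c}}(\tau_l,q)$ and the support $\mathcal{P}$ of a mixed best response is contained in that set, it suffices to show that every $\tau_l$ with $l<i$ yields a strictly larger cost than some $\tau_l$ with $l\in\{i,\ldots,m\}$.

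The key step is a telescoping computation. Using $\Omega_{l,j}(\phi)=\tfrac{c_\text{F}}{\tau_l}+\mathbf{1}_{\{j\le l\}}\mathcal{I}_j$ and the fact that $q_j=0$ for $j<i$, the cost splits into
\begin{align*}
    \bar{\mathfrak{c}}(\tau_l,q)=\frac{c_\text{F}}{\tau_l}+\sum_{j=i}^{\min(l,m)}q_j\mathcal{I}_j\ \ \text{for }l\geq i,\qquad \bar{\mathfrak{c}}(\tau_l,q)=\frac{c_\text{F}}{\tau_l}\ \ \text{for }l<i.
\end{align*}
Plugging $q_j\mathcal{I}_j=c_\text{F}\bigl(\tfrac{1}{\tau_{j-1}}-\tfrac{1}{\tau_j}\bigr)$ for $j\in\{i+1,\ldots,m\}$ into the first expression makes the sum collapse:
\begin{align*}
    \bar{\mathfrak{c}}(\tau_l,q)=\frac{c_\text{F}}{\tau_l}+q_i\mathcal{I}_i+c_\text{F}\left(\frac{1}{\tau_i}-\frac{1}{\tau_l}\right)=\frac{c_\text{F}}{\tau_i}+q_i\mathcal{I}_i,\qquad l\in\{i,\ldots,m\}.
\end{align*}
So the defender is indifferent among all $\tau_l$ with $l\geq i$: each yields the common cost $C^\star:=\tfrac{c_\text{F}}{\tau_i}+q_i\mathcal{I}_i$.

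Next I would compare $C^\star$ with the cost $\bar{\mathfrak{c}}(\tau_l,q)=c_\text{F}/\tau_l$ for $l<i$. Since $\tau_1<\cdots<\tau_{i-1}<\tau_i$, the smallest such cost is $c_\text{F}/\tau_{i-1}$, so it suffices to show $c_\text{F}/\tau_{i-1}>C^\star$, equivalently $q_i\mathcal{I}_i<c_\text{F}\bigl(\tfrac{1}{\tau_{i-1}}-\tfrac{1}{\tau_i}\bigr)$, which is exactly the strict inequality guaranteed by the upper bound in \eqref{eq:ConditionForQi}. Therefore $\bar{\mathfrak{c}}(\tau_l,q)>C^\star$ for every $l<i$, so no $\tau_l$ with $l<i$ minimizes the defender's cost. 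Since the support of any best response of the defender is contained in $\arg\min_l\bar{\mathfrak{c}}(\tau_l,q)\subseteq\{i,\ldots,m\}$, the claim follows.

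The only real obstacle is getting the telescoping arithmetic and the indicator bookkeeping right; once the closed form $C^\star$ is in hand, the comparison with the $l<i$ row is immediate. I would also note, as a sanity check on \eqref{eq:ConditionForQi}, that the bound $q_i<\tfrac{c_\text{F}}{\mathcal{I}_i}\bigl(\tfrac{1}{\tau_{i-1}}-\tfrac{1}{\tau_i}\bigr)$ is precisely what turns the weak inequality $c_\text{F}/\tau_{i-1}\ge C^\star$ into a strict one, so that actions $\tau_l$ with $l<i$ are ruled out of $\mathcal{P}$ rather than merely tied.
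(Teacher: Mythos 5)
Your proposal is correct and follows essentially the same route as the paper: the same telescoping identity $q_j\mathcal{I}_j=c_\text{F}\bigl(\tfrac{1}{\tau_{j-1}}-\tfrac{1}{\tau_j}\bigr)$ collapses the defender's cost to the constant $\tfrac{c_\text{F}}{\tau_i}+q_i\mathcal{I}_i$ on all rows $l\geq i$, and the same comparison with $\tfrac{c_\text{F}}{\tau_{i-1}}$ via the upper bound on $q_i$ excludes the rows $l<i$. The only cosmetic difference is that you compare pure-action costs and invoke that the support of a mixed best response lies in the pure-action argmin, whereas the paper carries a general mixed strategy $p$ through the computation and isolates the term $p_{1:i-1}^T\bigl(\tilde{\gamma}-(q_i\mathcal{I}_i+\tfrac{c_\text{F}}{\tau_i})1_{i-1}\bigr)$; these are equivalent.
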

\begin{proof}
First, note that $q_j>0$ for $j\in\lbrace i+1,\ldots,m \rbrace$, since $\tau_j>\tau_i$ if $j>i$. 
Further, if $q_i\in[0,1)$ we see that the mixed strategy $q$ given by \eqref{eq:ClosedFormNashForAttacker} is a proper probability distribution.

Next, we look at the possible best responses of the defender to this strategy.
The average cost of the defender is given by
\begin{align*}
    p^T\Omega q&=p^T\gamma+p^T\Upsilon q\\
    &=p_{1:i-1}^T\tilde{\gamma}+p_{i:m}^T\hat{\gamma}+p^T\begin{bmatrix} \Upsilon_{11} & \Upsilon_{12}\\\Upsilon_{21} & \Upsilon_{22}
    \end{bmatrix}q\\
    &=p_{1:i-1}^T\tilde{\gamma}+p_{i:m}^T\hat{\gamma}+\begin{bmatrix}p_{1:i-1}^T &p_{i:m}^T\end{bmatrix}\begin{bmatrix}0\\\Upsilon_{22}q_{i:m}
    \end{bmatrix}\\
    &=p_{1:i-1}^T\tilde{\gamma}+p_{i:m}^T\hat{\gamma}+p_{i:m}^T\Upsilon_{22}q_{i:m},
\end{align*}
where $\tilde{\gamma}^T=[\frac{c_\text{F}}{\tau_1},\ldots,\frac{c_\text{F}}{\tau_{i-1}}]$, $\hat{\gamma}^T=[\frac{c_\text{F}}{\tau_i},\ldots,\frac{c_\text{F}}{\tau_m}]$, and
\begin{align*}
    \Upsilon_{22}=\begin{bmatrix}
         \mathcal{I}_i & 0 & \cdots & 0\\
          \mathcal{I}_i & \mathcal{I}_{i+1} & \cdots & 0\\
          \vdots &\vdots &\ddots &\vdots\\
          \mathcal{I}_i & \mathcal{I}_{i+1} & \cdots & \mathcal{I}_m
    \end{bmatrix}.
\end{align*}
With that we can determine that
\begin{align*}
    \Upsilon_{22}q_{i:m}=\begin{bmatrix}
        q_i\mathcal{I}_i\\
        \sum_{j=i}^{i+1}q_j\mathcal{I}_j\\
        \vdots \\
        \sum_{j=i}^{m}q_j\mathcal{I}_j
    \end{bmatrix}=\begin{bmatrix}
        q_i\mathcal{I}_i\\
        q_i\mathcal{I}_i+\frac{c_{\text{F}}}{\tau_i}-\frac{c_{\text{F}}}{\tau_{i+1}}\\
        \vdots\\
        q_i\mathcal{I}_i+\frac{c_{\text{F}}}{\tau_i}-\frac{c_{\text{F}}}{\tau_{m}}
    \end{bmatrix},
\end{align*}
where we used \eqref{eq:ClosedFormNashForAttacker} for the values of $q_j$ for $j> i$.
This leads to the following average cost of the defender
\begin{align*}
    p^T\Omega q&=p_{1:i-1}^T\tilde{\gamma}+p_{i:m}^T\hat{\gamma}+p_{i:m}^T\Upsilon_{22}q_{i:m}\\
    &=p_{1:i-1}^T\tilde{\gamma}+p_{i:m}^T\hat{\gamma}+\big(q_i\mathcal{I}_i+\frac{c_{\text{F}}}{\tau_i}\big)\sum_{j=i}^{m}p_j-p_{i:m}^T\hat{\gamma}\\
    &=p_{1:i-1}^T\tilde{\gamma}+\big(q_i\mathcal{I}_i+\frac{c_{\text{F}}}{\tau_i}\big)\sum_{j=i}^{m}p_j.
\end{align*}
Now assume $\mathcal{P}\subseteq\lbrace i,\ldots,m\rbrace$, then $p_{1:i-1}=0$ and $\sum_{j=i}^m p_j=1$, such that the average cost turns into $p^T\Omega q=q_i\mathcal{I}_i+\frac{c_{\text{F}}}{\tau_i}$, which shows us that the defender is indifferent among its action, as it obtains the same average cost no matter how the distribution $p_{i:m}$ is chosen.

Now let $\mathcal{P}\not\subseteq\lbrace i,\ldots,m\rbrace$, then $\sum_{j=i}^m p_j=1-p_{1:i-1}^T1_{i-1}$, such that the average cost becomes
\begin{align*}
     p^T\Omega q=q_i\mathcal{I}_i+\frac{c_{\text{F}}}{\tau_i}+p_{1:i-1}^T\bigg(\tilde{\gamma}-\big(q_i\mathcal{I}_i+\frac{c_{\text{F}}}{\tau_i}\big)1_{i-1}\bigg).
\end{align*}
The defender chooses $\mathcal{P}\not\subseteq\lbrace i,\ldots,m\rbrace$ if, and only if,
    \begin{align*}
        q_i\mathcal{I}_i+\frac{c_{\text{F}}}{\tau_i}+p_{1:i-1}^T\bigg(\tilde{\gamma}-\big(q_i\mathcal{I}_i+\frac{c_{\text{F}}}{\tau_i}\big)1_{i-1}\bigg)&\leq q_i\mathcal{I}_i+\frac{c_{\text{F}}}{\tau_i}\\
        \Leftrightarrow p_{1:i-1}^T\bigg(\tilde{\gamma}-\big(q_i\mathcal{I}_i+\frac{c_{\text{F}}}{\tau_i}\big)1_{i-1}\bigg)&\leq 0.
    \end{align*}
Since the elements of both $p_{1:i-1}$ and $\tilde{\gamma}$ are positive and 
$\frac{c_{\text{F}}}{\tau_1}>\frac{c_{\text{F}}}{\tau_2}>\ldots>\frac{c_{\text{F}}}{\tau_{i-1}}$, we obtain the following necessary condition for choosing $\mathcal{P}\not\subseteq\lbrace i,\ldots,m\rbrace$,
\begin{align*}
    \frac{c_{\text{F}}}{\tau_{i-1}}\leq q_i\mathcal{I}_i+\frac{c_{\text{F}}}{\tau_i}.
\end{align*}
Therefore, if 
\begin{align*}
    \frac{c_{\text{F}}}{\tau_{i-1}}> q_i\mathcal{I}_i+\frac{c_{\text{F}}}{\tau_i},
\end{align*}
then $p_{1:i-1}^T\bigg(\tilde{\gamma}-\big(q_i\mathcal{I}_i+\frac{c_{\text{F}}}{\tau_i}\big)1_{i-1}\bigg)> 0$ and the defender chooses $\mathcal{P}\subseteq\lbrace i,\ldots,m\rbrace$ to minimize its cost. 
Reformulating this inequality gives us the upper bound
\begin{align*}
    q_i<\frac{c_{\text{F}}}{\mathcal{I}_i}\left(\frac{1}{\tau_{i-1}}-\frac{1}{\tau_i}\right).
\end{align*}
Note that if $\tau_{i-1}$ is strictly dominated by $\tau_{i}$, this upper bound is larger than $1$ and therefore automatically fulfilled, if $q_i\in[0,1)$.
However, if $\tau_{i-1}$ is \emph{not} strictly dominated by $\tau_i$ then both
\begin{align*}
    \frac{c_{\text{F}}}{\tau_{i-1}}>\frac{c_{\text{F}}}{\tau_{i}}\ \mathrm{and}\ \frac{c_{\text{F}}}{\tau_{i-1}}\leq\frac{c_{\text{F}}}{\tau_{i}}+\mathcal{I}_i
\end{align*}
hold, which means there exists $\rho_i\in(0,1]$ such that
\begin{align*}
    \frac{c_{\text{F}}}{\tau_{i-1}}=\frac{c_{\text{F}}}{\tau_{i}}+\rho_i\mathcal{I}_i
\end{align*}
holds and we can determine $\rho_i$ as
\begin{align*}
    \rho_i=\frac{c_{\text{F}}}{\mathcal{I}_i}\left(\frac{1}{\tau_{i-1}}-\frac{1}{\tau_i}\right).
\end{align*}

Hence, if $q$ is chosen according to \eqref{eq:ClosedFormNashForAttacker} such that $q_i$ fulfills \eqref{eq:ConditionForQi}, the defender chooses $\mathcal{P}\subseteq\lbrace i,\ldots,m\rbrace$.
\end{proof}

Lemma~\ref{lem:ClosedFormNashForAttacker} shows us the support of best responses for the defender to the attack strategy \eqref{eq:ClosedFormNashForAttacker}, however, it still leaves the open question how to choose $i$ such that \eqref{eq:ConditionForQi} is fulfilled.

\begin{lemma}
    \label{lem:UniqueIndexForQ}
    There exists a unique index $i=i^*\in(1,m-1)$ so that \eqref{eq:ClosedFormNashForAttacker} is a proper probability distribution and \eqref{eq:ConditionForQi} holds if ${1-\sum_{j=2}^m\frac{c_\text{F}}{\mathcal{I}_{j}}\left(\frac{1}{\tau_{j-1}}-\frac{1}{\tau_{j}}\right)<0}$ and $\frac{c_{\text{F}}}{\mathcal{I}_m}\left(\frac{1}{\tau_{m-1}}-\frac{1}{\tau_{m}}\right)<1$.
\end{lemma}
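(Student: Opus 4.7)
The plan is to reduce the two requirements --- that \eqref{eq:ClosedFormNashForAttacker} be a proper distribution and that $q_i$ satisfy \eqref{eq:ConditionForQi} --- to a single two-sided inequality on a monotone scalar sequence, and then pin down $i^*$ by a discrete intermediate-value argument. Abbreviate $\rho_j := \frac{c_\text{F}}{\mathcal{I}_j}\bigl(\frac{1}{\tau_{j-1}} - \frac{1}{\tau_j}\bigr)$ for $j \in \{2, \ldots, m\}$ and introduce the tail sum $S_i := \sum_{j=i+1}^m \rho_j$ for $i \in \{1, \ldots, m\}$, so that $S_m = 0$ and $S_{i-1} = S_i + \rho_i$. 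Since $\tau_{j-1} < \tau_j$ and $\mathcal{I}_j > 0$, every $\rho_j > 0$, hence $\{S_i\}$ is strictly decreasing. Under \eqref{eq:ClosedFormNashForAttacker} the entries $q_j$ for $j > i$ equal $\rho_j > 0$ automatically, and the residual probability is $q_i = 1 - S_i$.

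Next, I would translate both requirements into inequalities on $S_i$. Properness is equivalent to $q_i \ge 0$, i.e., $S_i \le 1$. The upper bound in \eqref{eq:ConditionForQi} is, in view of the derivation in the proof of Lemma~\ref{lem:ClosedFormNashForAttacker}, the strict defender-indifference condition $q_i < \rho_i$ --- automatic for $q_i \in [0,1)$ when $\rho_i > 1$, and otherwise the binding bound $q_i < \rho_i \le 1$. Substituting $q_i = 1 - S_i$ turns this into $S_{i-1} = S_i + \rho_i > 1$. Thus the two requirements together are exactly the crossing condition $S_{i^*} \le 1 < S_{i^*-1}$, which by strict monotonicity of $\{S_i\}$ admits at most one solution.

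The first hypothesis, $1 - \sum_{j=2}^m \rho_j < 0$, reads $S_1 > 1$ and so rules out $i^* = 1$; the second hypothesis, $\rho_m < 1$, reads $S_{m-1} < 1$ and so rules out $i^* = m$ (which would require $S_{m-1} > 1$). Since $\{S_i\}$ strictly descends from $S_1 > 1$ to $S_m = 0$, the crossing index $i^* \in \{2, \ldots, m-1\}$ exists and is unique.

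The main obstacle, I expect, is conceptual rather than computational: one must extract from \eqref{eq:ConditionForQi} the strict inequality $q_i < \rho_i$ rather than merely $q_i \in [0, 1)$, since only the former tightens the slack bound $S_i \le 1$ into the crossing condition that delivers \emph{uniqueness}. Once both hypotheses are recognized as capping the monotone sequence $\{S_i\}$ strictly above $1$ at $i = 1$ and strictly below $1$ at $i = m-1$, the existence and uniqueness of $i^*$ follow at once from strict monotonicity of the tail sums.
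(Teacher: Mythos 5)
Your proof is correct and follows essentially the same route as the paper's: both arguments exploit that $q_i = 1-\sum_{j=i+1}^m\frac{c_\text{F}}{\mathcal{I}_j}\bigl(\frac{1}{\tau_{j-1}}-\frac{1}{\tau_j}\bigr)$ is strictly monotone in $i$, use the two hypotheses to place the tail sum strictly above $1$ at $i=1$ and strictly below $1$ at $i=m-1$, and identify $i^*$ as the unique crossing index at which $q_{i^*}\geq 0$ while $q_{i^*}<\frac{c_\text{F}}{\mathcal{I}_{i^*}}\bigl(\frac{1}{\tau_{i^*-1}}-\frac{1}{\tau_{i^*}}\bigr)$. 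Your reading of \eqref{eq:ConditionForQi} as the binding bound $q_i<\rho_i$ is exactly the one the paper's own proof uses, so the two arguments coincide up to notation.
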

\begin{proof}
    From \eqref{eq:ClosedFormNashForAttacker}, we obtain that $q_i=1-\sum_{j=i+1}^m\frac{c_\text{F}}{\mathcal{I}_{j}}\left(\frac{1}{\tau_{j-1}}-\frac{1}{\tau_{j}}\right)$, which is strictly decreasing as $i$ decreases.
    Since $\frac{c_{\text{F}}}{\mathcal{I}_m}\left(\frac{1}{\tau_{m-1}}-\frac{1}{\tau_{m}}\right)<1$ holds, we know that $q_i\in(0,1)$ for $i=m-1$.
    Further, since ${1-\sum_{j=2}^m\frac{c_\text{F}}{\mathcal{I}_{j}}\left(\frac{1}{\tau_{j-1}}-\frac{1}{\tau_{j}}\right)<0}$ we know that $q_i<0$ for $i=1$.
    Hence, there exists an $i=i^*>1$ such that $q_i\geq 0$ while for $i<i^*$ we have $q_i<0$, such that the mixed strategy in \eqref{eq:ClosedFormNashForAttacker} is not a proper probability distribution and therefore not a valid strategy.
    For $i=i^*$ we can, therefore, show that
       \begin{multline*}
        1-\sum_{l=i^*}^m\frac{c_{\text{F}}}{\mathcal{I}_l}\left(\frac{1}{\tau_{l-1}}-\frac{1}{\tau_{l}}\right)< 0\\
        \Leftrightarrow 1-\sum_{l=i^*+1}^m\frac{c_{\text{F}}}{\mathcal{I}_l}\left(\frac{1}{\tau_{l-1}}-\frac{1}{\tau_{l}}\right)-\frac{c_{\text{F}}}{\mathcal{I}_{i^*}}\left(\frac{1}{\tau_{i^*-1}}-\frac{1}{\tau_{i^*}}\right)< 0\\
        \Leftrightarrow q_{i^*}-\frac{c_{\text{F}}}{\mathcal{I}_{i^*}}\left(\frac{1}{\tau_{i^*-1}}-\frac{1}{\tau_{i^*}}\right)< 0\\
        \Leftrightarrow q_{i^*}< \frac{c_{\text{F}}}{\mathcal{I}_{i^*}}\left(\frac{1}{\tau_{i^*-1}}-\frac{1}{\tau_{i^*}}\right).
    \end{multline*}
    Hence, \eqref{eq:ConditionForQi} holds for $i=i^*$.
    Now assume we choose $i=j>i^*$ such that $q_i\in[0,1)$ if $i=j$ and also $q_i\geq 0$ if $i=j-1$. 
    Then we obtain that
    \begin{multline*}
        1-\sum_{l=j}^m\frac{c_{\text{F}}}{\mathcal{I}_l}\left(\frac{1}{\tau_{l-1}}-\frac{1}{\tau_{l}}\right)\geq 0\\
        \Leftrightarrow 1-\sum_{l=j+1}^m\frac{c_{\text{F}}}{\mathcal{I}_l}\left(\frac{1}{\tau_{l-1}}-\frac{1}{\tau_{l}}\right)-\frac{c_{\text{F}}}{\mathcal{I}_{j}}\left(\frac{1}{\tau_{j-1}}-\frac{1}{\tau_{j}}\right)\geq 0\\
        \Leftrightarrow q_{j}-\frac{c_{\text{F}}}{\mathcal{I}_{j}}\left(\frac{1}{\tau_{j-1}}-\frac{1}{\tau_{j}}\right)\geq 0\\
        \Leftrightarrow q_{j}\geq \frac{c_{\text{F}}}{\mathcal{I}_{j}}\left(\frac{1}{\tau_{j-1}}-\frac{1}{\tau_{j}}\right).
    \end{multline*}
    Hence, \eqref{eq:ConditionForQi} does not hold for any $i\neq i^*$. Therefore, $i=i^*$ is the smallest index for which $q_i^*\in[0,1)$ and the unique index for which \eqref{eq:ConditionForQi} holds.
\end{proof}

Similar to Lemma~\ref{lem:ClosedFormNashForAttacker} we can find a mixed strategy $p$ with support $\mathcal{P}=\lbrace i,i+1,\ldots,m \rbrace$ such that the best response of the attacker has support $\mathcal{Q}\subseteq\lbrace i,i+1,\ldots,m \rbrace$.

\begin{lemma}
\label{lem:ClosedFormNashForDefender}
If the defender uses the mixed strategy
\begin{align}
        \label{eq:ClosedFormNashForDefender}
        p_{j}=\begin{cases}1-\frac{\mathcal{I}_{i}}{\mathcal{I}_{i+1}}, & \mathrm{if}\ j=i,\\ 
        \frac{\mathcal{I}_{i}}{\mathcal{I}_{j}}-\frac{\mathcal{I}_{i}}{\mathcal{I}_{j+1}}, & \mathrm{if}\ j\in\lbrace i+1,\ldots,m-1\rbrace,\\
        \frac{\mathcal{I}_{i}}{\mathcal{I}_{m}}, & \mathrm{if}\ j=m,\\
        0, &\mathrm{otherwise},\end{cases}
\end{align}
then the support of the attacker's best response needs to satisfy $\mathcal{Q}\subseteq\lbrace i,i+1,\ldots,m \rbrace$.
\end{lemma}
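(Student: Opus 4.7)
The plan is to mirror the structure of Lemma~\ref{lem:ClosedFormNashForAttacker}, only with the roles of the players swapped. First I would verify that \eqref{eq:ClosedFormNashForDefender} is a valid (i.e., non-negative and normalised) probability distribution: by Assumption~\ref{assum:PayoffFunctionOfAttacker} the impacts are strictly increasing, $\mathcal{I}_i<\mathcal{I}_{i+1}<\cdots<\mathcal{I}_m$, so every listed $p_j$ is non-negative, and the sum $p_i+\sum_{j=i+1}^{m-1}p_j+p_m$ telescopes to $1$.

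The heart of the argument is then to compute the attacker's expected payoff against $p$ for each pure attacker action $j\in\lbrace 1,\ldots,m\rbrace$. Using the structure $\Upsilon_{l,j}=\mathbf{1}_{\lbrace j\leq l\rbrace}\mathcal{I}_j$, this payoff equals $\mathcal{I}_j\sum_{l=j}^m p_l$. The key observation, which is what motivates the form of \eqref{eq:ClosedFormNashForDefender}, is that the tail sum $\sum_{l=j}^m p_l$ telescopes to $\mathcal{I}_i/\mathcal{I}_j$ for every $j\in\lbrace i+1,\ldots,m\rbrace$, and equals $1$ for $j=i$. Consequently every pure action $j\in\lbrace i,\ldots,m\rbrace$ yields exactly the same expected payoff $\mathcal{I}_i$, so the attacker is indifferent among all actions in this range, and any mixed strategy supported on $\lbrace i,\ldots,m\rbrace$ is a best response.

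For actions $j<i$, since $p_l=0$ for $l<i$, the tail sum collapses to $\sum_{l=i}^m p_l=1$, so the payoff is $\mathcal{I}_j<\mathcal{I}_i$ by Assumption~\ref{assum:PayoffFunctionOfAttacker}. Hence every $j<i$ is strictly worse than $j=i$ against this $p$, and the attacker's best response cannot place any probability mass there, giving $\mathcal{Q}\subseteq\lbrace i,i+1,\ldots,m\rbrace$ as claimed. No step is a genuine obstacle; the only delicate point is executing the telescoping carefully with the different cases $j=i$, $i<j<m$, and $j=m$ (the last handled separately because of the boundary definition $p_m=\mathcal{I}_i/\mathcal{I}_m$).
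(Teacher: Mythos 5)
Your proposal is correct and follows essentially the same route as the paper: verify that $p$ is a proper distribution, show via the telescoping tail sums that every pure attacker action $j\in\lbrace i,\ldots,m\rbrace$ yields the same expected payoff $\mathcal{I}_i$ (the paper phrases this as $p_{i:m}^T\Upsilon_{22}=1_{m-i+1}^T\mathcal{I}_i$), and that every action $j<i$ yields only $\mathcal{I}_j<\mathcal{I}_i$, so no best response can put mass below $i$. The only difference is presentational — you argue column by column where the paper uses a block-matrix decomposition of $\Upsilon$ — and both are equally valid.
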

\begin{proof}
First note that since $\mathcal{I}_j>\mathcal{I}_i$ if $j>i$, we see that each $p_i\in (0,1)$. Furthermore, we can verify that ${\sum_{j=1}^m p_j=1}$. Hence, the mixed strategy described by \eqref{eq:ClosedFormNashForDefender} is a proper probability distribution.

Next, we look at the possible best responses of the attacker to this strategy. 
The average cost of the attacker is
\begin{align*}
    p^T\Upsilon q&=p^T\begin{bmatrix} \Upsilon_{11} & \Upsilon_{12}\\\Upsilon_{21} & \Upsilon_{22}
    \end{bmatrix}q\\
    &=p_{i:m}^T\Upsilon_{21}q_{1:i-1}+p_{i:m}^T\Upsilon_{22}q_{i:m}\\
    &=\sum_{j=1}^{i-1}q_j\mathcal{I}_j+p_{i:m}^T\Upsilon_{22}q_{i:m},
\end{align*}
where we used that $\Upsilon_{12}=0$ and $\Upsilon_{21}=1_{i-1}[\mathcal{I}_1,\mathcal{I}_2,\ldots,\mathcal{I}_{i-1}]$.
Due to the chosen $p$, we obtain that $p_{i:m}^T\Upsilon_{22}=1_{m-i+1}^T\mathcal{I}_i$, which results in the following average cost
\begin{align*}
     p^T\Upsilon q&=\sum_{j=1}^{i-1}q_j\mathcal{I}_j+\mathcal{I}_i\sum_{l=i}^m q_l\\
     &=\sum_{j=1}^{i-1}q_j\mathcal{I}_j+\mathcal{I}_i\left(1-\sum_{j=1}^{i-1}q_j\right)\\
     &=\mathcal{I}_i+\sum_{j=1}^{i-1}q_j(\mathcal{I}_j-\mathcal{I}_i)\leq \mathcal{I}_i.
\end{align*}
The upper bound comes from the fact that $\mathcal{I}_j<\mathcal{I}_i$ if $j\in\lbrace 1,\ldots,i-1\rbrace$ (see Assumption~\ref{assum:PayoffFunctionOfAttacker}).
Hence, we see that the best response of the attacker to the defender's strategy $p$ is any mixed strategy $q$ with support $\mathcal{Q}\subseteq\lbrace i,i+1,\ldots,m \rbrace$.
\end{proof}

One notable difference between the results given in Lemma~\ref{lem:ClosedFormNashForAttacker} and Lemma~\ref{lem:ClosedFormNashForDefender} is that the defender's mixed strategy $p$ given by \eqref{eq:ClosedFormNashForDefender} is valid for all $i$, while $q$ given by \eqref{eq:ClosedFormNashForAttacker} has the additional constraint \eqref{eq:ConditionForQi}.
However, Lemma~\ref{lem:UniqueIndexForQ} shows us that under a certain condition there exists a unique index $i$ for which \eqref{eq:ConditionForQi} holds.
Next, we show that for a specific choice of $i$ the strategies \eqref{eq:ClosedFormNashForAttacker} and \eqref{eq:ClosedFormNashForDefender} form a mixed strategy Nash equilibrium.

\begin{theorem}
    \label{thm:ClosedFormSolutionForNashEquilibrium}
    Let $i=i^*\in[1,m-1]$ be the \emph{smallest} index for which $q_i^*\in[0,1)$ in \eqref{eq:ConditionForQi} holds such that $q^*$ is a proper probability distribution.
    The mixed strategies $p^*$ and $q^*$ given by \eqref{eq:ClosedFormNashForDefender} and \eqref{eq:ClosedFormNashForAttacker}, respectively, form a mixed strategy Nash equilibrium such that $p^*$ is a MTD if, and only if, $\frac{c_{\text{F}}}{\mathcal{I}_m}\left(\frac{1}{\tau_{m-1}}-\frac{1}{\tau_{m}}\right)\leq 1$.
\end{theorem}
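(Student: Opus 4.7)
The plan is to leverage the three preceding lemmas (Lemma~\ref{lem:ClosedFormNashForAttacker}, Lemma~\ref{lem:UniqueIndexForQ}, and Lemma~\ref{lem:ClosedFormNashForDefender}) together with Theorem~\ref{thm:NecessaryAndSufficientConditionForBayesNashEquilibrium} specialized to a single attacker type (Remark~\ref{rmk:OneAttackerType}). The proof naturally splits into the two directions of the equivalence, and each direction uses a different piece of machinery.

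For the \emph{only if} direction I would argue by contrapositive. Suppose $\frac{c_{\text{F}}}{\mathcal{I}_m}\left(\frac{1}{\tau_{m-1}}-\frac{1}{\tau_m}\right) > 1$. Rearranging gives $\frac{c_{\text{F}}}{\tau_{m-1}} > \frac{c_{\text{F}}}{\tau_m} + \mathcal{I}_m$, which is precisely the negation of the existence condition \eqref{eq:NecessaryAndSufficientConditionForBayesNashEquilibrium} for $n_\phi=1$. Theorem~\ref{thm:NecessaryAndSufficientConditionForBayesNashEquilibrium} then implies that no MTD equilibrium exists at all, so in particular the strategies $p^*$ and $q^*$ from \eqref{eq:ClosedFormNashForDefender} and \eqref{eq:ClosedFormNashForAttacker} cannot form a Nash equilibrium with $p^*$ an MTD. (A direct check confirms this: the smallest admissible $i^*$ collapses to $m$ in this case, making $p^*_m=1$ a pure strategy.)

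For the \emph{if} direction, assume $\frac{c_{\text{F}}}{\mathcal{I}_m}\left(\frac{1}{\tau_{m-1}}-\frac{1}{\tau_m}\right) \leq 1$. First I would note that substituting $i=m-1$ into \eqref{eq:ClosedFormNashForAttacker} gives $q_{m-1} = 1 - \frac{c_{\text{F}}}{\mathcal{I}_m}\left(\frac{1}{\tau_{m-1}}-\frac{1}{\tau_m}\right) \in [0,1)$, so an admissible index exists in $[1,m-1]$; the smallest such index $i^*$ is well-defined (its uniqueness under the strict version of the condition is Lemma~\ref{lem:UniqueIndexForQ}) and satisfies $i^* \leq m-1$. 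With this $i^*$, $q^*$ is a valid probability distribution and $p^*$ has support exactly $\{i^*,i^*+1,\ldots,m\}$, which contains at least two indices, so $p^*$ is an MTD in the sense of Definition~\ref{def:MovingTargetDefense}. It remains to verify the Nash equilibrium conditions \eqref{eq:NashEquilibrium}: Lemma~\ref{lem:ClosedFormNashForAttacker}, applied with $i=i^*$ to the attacker strategy $q^*$, shows that every defender distribution supported on $\{i^*,\ldots,m\}$ achieves the same minimum average cost and is therefore a best response; in particular $p^*$ is a best response to $q^*$. Symmetrically, Lemma~\ref{lem:ClosedFormNashForDefender} applied to $p^*$ shows that any attacker strategy with support in $\{i^*,\ldots,m\}$ is a best response, so $q^*$ is a best response to $p^*$. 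Hence $(p^*,q^*)$ is a mixed strategy Nash equilibrium with $p^*$ an MTD.

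The main obstacle I anticipate is the boundary case $\frac{c_{\text{F}}}{\mathcal{I}_m}\left(\frac{1}{\tau_{m-1}}-\frac{1}{\tau_m}\right) = 1$, in which $q^*_{m-1}=0$ and hence $q^*$ degenerates to a point mass on $\mathcal{I}_m$. Here one must check that (i) $p^*$ remains genuinely mixed (it does, since $p^*_{m-1}=1-\mathcal{I}_{m-1}/\mathcal{I}_m$ and $p^*_m=\mathcal{I}_{m-1}/\mathcal{I}_m$ both lie in $(0,1)$), and (ii) the best-response characterization of Lemma~\ref{lem:ClosedFormNashForAttacker} (whose statement restricts to $1<i<m$) still applies at $i=m-1$ with $q_i=0$, which is a minor bookkeeping extension of the argument given in that lemma's proof. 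Once this edge case is settled, the combination of the above lemmas yields the full equivalence in a direct way.
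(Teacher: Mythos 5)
Your proposal is correct and follows essentially the same route as the paper: establish that a valid index $i^*\in[1,m-1]$ exists precisely when $\frac{c_{\text{F}}}{\mathcal{I}_m}\left(\frac{1}{\tau_{m-1}}-\frac{1}{\tau_m}\right)\leq 1$ (the paper does this by a three-case analysis invoking Lemma~\ref{lem:UniqueIndexForQ}, you by evaluating $q_{m-1}$ directly), and then combine Lemma~\ref{lem:ClosedFormNashForAttacker} and Lemma~\ref{lem:ClosedFormNashForDefender} to show $p^*$ and $q^*$ are mutual best responses. Your only-if direction via Theorem~\ref{thm:NecessaryAndSufficientConditionForBayesNashEquilibrium} is a small, valid variation on the paper's direct observation that $q_m>1$ makes \eqref{eq:ClosedFormNashForAttacker} improper, and your attention to the boundary case $q_{i^*}=0$ matches the paper's handling of the degenerate equilibrium.
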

\begin{proof}
    We begin by noting that if $\frac{c_{\text{F}}}{\mathcal{I}_m}\left(\frac{1}{\tau_{m-1}}-\frac{1}{\tau_{m}}\right)> 1$, then $q_m>1$ such that \eqref{eq:ClosedFormNashForAttacker} is not a proper probability distribution.
    Next, we show that if $\frac{c_{\text{F}}}{\mathcal{I}_m}\left(\frac{1}{\tau_{m-1}}-\frac{1}{\tau_{m}}\right)\leq 1$ then $i^*\in[1,m-1]$ such that $q_{i^*}\in[0,1)$ exists. 
    For this, we need to consider three different cases.
    In the first case, we assume that ${1-\sum_{j=2}^m\frac{c_\text{F}}{\mathcal{I}_{j}}\left(\frac{1}{\tau_{j-1}}-\frac{1}{\tau_{j}}\right)<0}$ and $\frac{c_\text{F}}{\mathcal{I}_m}(\frac{1}{\tau_{m-1}}-\frac{1}{\tau_m})<1$.
    Then Lemma~\ref{lem:UniqueIndexForQ} shows us that a unique $i^*\in(1,m-1)$ exists, for which \eqref{eq:ConditionForQi} holds and $i^*$ is also the smallest index for which $q_i\in[0,1)$.
    In the second case, we assume that ${1-\sum_{j=2}^m\frac{c_\text{F}}{\mathcal{I}_{j}}\left(\frac{1}{\tau_{j-1}}-\frac{1}{\tau_{j}}\right)\geq 0}$ and $\frac{c_\text{F}}{\mathcal{I}_m}(\frac{1}{\tau_{m-1}}-\frac{1}{\tau_m})<1$. Then $i^*=1$ guarantees that $q_i\in[0,1)$.
    Furthermore, $i^*=1$ is also the smallest index in this case for which $q_i\in[0,1)$.
    In the third case, we assume that $\frac{c_\text{F}}{\mathcal{I}_m}(\frac{1}{\tau_{m-1}}-\frac{1}{\tau_m})=1$, which shows us that $i^*=m-1$ is the smallest index in this case for which $q_i\in[0,1)$.
    Hence, there exists a unique index $i^*\in[1,m-1]$, which is the smallest index such that $q_{i}\in[0,1)$ if, and only, if $\frac{c_{\text{F}}}{\mathcal{I}_m}\left(\frac{1}{\tau_{m-1}}-\frac{1}{\tau_{m}}\right)\leq 1$.
    
    Next, if we use $q^*$ with $i=i^*$, we see that the support of the defender's best response needs to fulfil $\mathcal{P}\subseteq\lbrace i^*, i^*+1, \ldots, m\rbrace$, which is fulfilled when $p^*$ is used.
    Therefore, $p^*$ is a best response to $q^*$.
    
    Finally, if we use $p^*$ with $i=i^*$, we see that the support of the defender's best response needs to fulfill $\mathcal{Q}\subseteq\lbrace i^*, i^*+1, \ldots, m\rbrace$, which is fulfilled when $q^*$ is used.
    Therefore, $q^*$ is a best response to $p^*$.
    Hence, $p^*$ and $q^*$ form a mixed strategy Nash equilibrium and $p^*$ is a MTD according to Definition~\ref{def:MovingTargetDefense}.
\end{proof}

Theorem~\ref{thm:ClosedFormSolutionForNashEquilibrium} presents one optimal solution to the optimization problem \eqref{eq:LinearOptForBayesMTD} when $n_\phi=1$. 
Our numerical experiments in Section~\ref{sec:SimulationsForClosedLoopSolution} show that the optimal solution obtained by solving \eqref{eq:LinearOptForBayesMTD} coincides with the equilibrium proposed in Theorem~\ref{thm:ClosedFormSolutionForNashEquilibrium}.

\begin{remark}
Note that if $\frac{c_\text{F}}{\mathcal{I}_m}(\frac{1}{\tau_{m-1}}-\frac{1}{\tau_m})=1$, then we have $i^*=m-1$ and $q_{i^*}=0$ such that the attacker plays a pure strategy. 
This means there are at least two Nash equilibria, one given by $p^*$ and $\mathcal{I}_m$, and one given by $\tau_m$ and $\mathcal{I}_m$.
In this case, our matrix game is degenerate but there still exists an MTD according to Definition~\ref{def:MovingTargetDefense}.
\end{remark}

\begin{remark}
\label{rem:DummyAttackerForClosedFormSolution}
As already mentioned in Remark~\ref{rem:DummyAttackersMakesSense}, it is often reasonable to include an attacker type that has zero payoff for all trajectories $a$ such that the defender does also consider the case without an attacker when choosing the threshold.
For the closed-form solution presented in this section, we are able to replace $\mathcal{I}_j$ by $\pi_1\mathcal{I}_j$ to take the attacker type with zero payoff into account, where $\pi_1\in(0,1]$ is the probability that the attack is happening.
Interestingly, this modification does not change the defender's equilibrium strategy \eqref{eq:ClosedFormNashForDefender}.
However, it does lead to a larger $i^*$ in Lemma~\ref{lem:UniqueIndexForQ} (see Section~\ref{sec:SimulationsForClosedLoopSolution}). 
This is consistent with our intuition since the attack will be less likely to happen and the defender focuses on increasing the mean time between false alarms by choosing larger thresholds.
\end{remark}
\section{Numerical Evaluation}
\label{sec:NumericalEvaluation}
For the numerical evaluation, we look at a four-tank system~\cite{QuadrupleTank}, which we linearize around the input voltage of $6\,\mathrm{V}$ and discretize with a sampling time of $0.5\,\mathrm{s}$.
We further assume that $w(k)\sim \mathcal{N}(0,0.1 I_4)$ and $v(k)\sim \mathcal{N}(0,0.01 I_2)$ and an LQG controller is used, where the LQR cost matrix for the states is $I_4$ and the cost matrix for the controller input is $I_2$, i.e., the controller input $u(k)=-Kx(k)$ minimizes the cost function $\sum_{k=0}^\infty x(k)^Tx(k)+u(k)^Tu(k)$.
The attack length is chosen to be $N=1000$ time steps.
For anomaly detection, a $\chi^2$ detector is used such that
\begin{align*}
    y_D(k+1)=r(k)^Tr(k).
\end{align*}

\subsection{Bayesian Nash equilibrium}
In this part, we solve the optimization problem \eqref{eq:LinearOptForBayesMTD} to find the equilibrium moving target defense for the defender. 
Note that we choose the defender's set of actions, the attacker type payoff functions and the factor $c_F$ in the defender's objective function for illustrative purposes of the presented MTD framework. 
In practice, the defender needs to choose $c_F$ according to its cost for false alarm and the payoff functions of the attacker types could be the result of a risk assessment as discussed in Section~\ref{sec:DiscussionOfAssumptions}.

The defender considers six thresholds, which correspond to the following average times between false alarms,
\begin{align}
    \label{eq:SetForTauInSim}
    \tau\in\lbrace10,\ 10^2,\ 10^3,\ 10^4,\ 10^5,\ 10^6\rbrace.
\end{align}
These values are chosen to cover a wide range of average times between false alarms. 
Further, we use $c_F=43200$ as the cost factor for false alarms.

The defender faces $n_\phi=5$ attacker types. 
We further assume that the attack starts at $\ubar{k}=0$.
The first attacker type is an attacker with zero payoff, i.e., $f_1(a)=0$ for all $a$.
This type represents the case where there is actually no attacker present in the system and the defender only has to consider the cost induced by the false alarms.
For the other attacker types, we use the average value of the plant's state at the end of the attack, i.e., $\bar{x}=\mathbb{E}\lbrace x(N)\rbrace$, to define the payoff. 
For attacker type $\phi\in\lbrace 2,\ldots, 5\rbrace$, we use $f_\phi(a)=|\bar{x}_{\phi-1}|^2$ as the payoff function. 
Thus, attacker type~2 attacks the water level in tank~1, attacker type~3 attacks the water level in tank~2 and so on.
Since a $\chi^2$ detector is used, we can use the results of Proposition~3 in \cite{UmsonstCDC18} to determine the attack impact for each attacker type for a given $\tau$.

We consider three different scenarios that differ in terms of their priors $\pi_\phi$.
In the first scenario, the operator assumes that it is more likely that there is no attack than that there is an attack, and thus $\pi_1=0.6$ and $\pi_\phi=0.1$ for $\phi\in\lbrace 2,\ldots,5\rbrace$.
In the second scenario, the operator assumes that there is always an attack but we want to investigate how the defender's strategy changes when the attacks are still equally likely, i.e., $\pi_1=0$ and $\pi_\phi=0.25$ for $\phi\in\lbrace 2,\ldots,5\rbrace$.
In the third scenario, there is also always an attack but this time the attacker is assumed to most likely attack the first and second state of the plant, i.e., $\pi_1=0$, $\pi_2=\pi_3=0.49$, and $\pi_4=\pi_5=0.01$.

Figure~\ref{fig:OptimalStrategiesDef} shows the equilibrium MTD of the defender for the three different scenarios.
\begin{figure}
    \centering
    \includegraphics[scale=0.55]{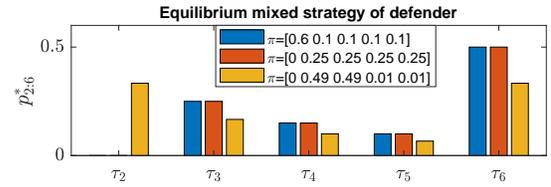}
    \caption{The plot shows the MTD of the defender for three different priors of an attack happening, where the horizontal axis are the defender actions and the vertical axis shows the probability of choosing the respective action.}
    \label{fig:OptimalStrategiesDef}
\end{figure}
Figure~\ref{fig:OptimalStrategiesAtk} shows the equilibrium mixed strategies for attacker type 2 to attacker type 5.
\begin{figure}
    \centering
    \includegraphics[scale=0.55]{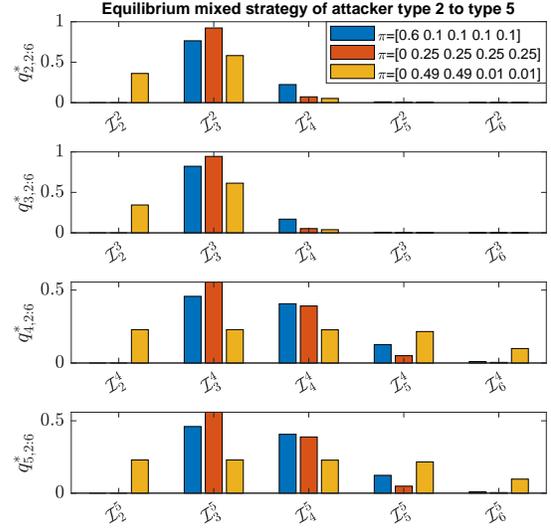}
    \caption{For three different priors, the equilibrium mixed strategies of attacker type 2 to type 5 are shown in each of the subfigures. 
    The horizontal axis shows the attacker actions while the vertical axis shows the probability of choosing the respective action.}
    \label{fig:OptimalStrategiesAtk}
\end{figure}
Using Lemma~\ref{lem:DominatedStrategiesInducedMatrixGame}, we can determine that for the first two scenarios the defender's actions $\tau_1$ and $\tau_2$ are strictly dominated and, therefore, will not be used in the Bayesian Nash equilibrium. 
In the third scenario, only $\tau_1$ is strictly dominated and $\tau_2$ is used in the Bayesian Nash equilibrium.
This means that the cost for an attack that is stealthy for $\tau_1$ is negligible compared to the cost for false alarms.
Further, this also means none of the attacker types will use the attack action corresponding to these thresholds in the respective scenarios, because the attacker wants to maximize its payoff.
Hence, $p^*_{1}=q^*_{\phi,1}=0$ for all $\phi$ and these values are not depicted in Figure~\ref{fig:OptimalStrategiesDef} and Figure~\ref{fig:OptimalStrategiesAtk} for the sake of simplicity.
Furthermore, since the attacker type 1 obtains always zero payoff it does not influence the objective value of the optimization problem in \eqref{eq:LinearOptForBayesMTD}.
Therefore, we can arbitrarily choose $q^*_{1,1:6}=\frac{1}{6}1_6$ as the equilibrium mixed strategy of attacker type 1 for all three scenarios.
This is not shown in the figures for the sake of simplicity.

From Figure~\ref{fig:OptimalStrategiesDef}, we can observe that the defender chooses $\tau_6$ with the highest probability in the MTD, while the attacker puts most of the probability weight on the attacks with a lower payoff than the payoff corresponding to $\tau_6$.
Since the attack will not receive any payoff if it is detected, this observation is reasonable and also shows that the proposed moving target defense is effectively limiting the attacker payoff.
This can be observed especially in the third scenario, where attacker type 2 and type 3 are more likely than attacker type 4 and type 5. In addition to that, the payoffs of attacker type 2 and type 3 are larger than for type 4 and type~5. 
Hence, a larger attacker payoff is more likely than in the second scenario.
Therefore, in the third scenario the defender chooses $\tau_2$ with a non-zero probability to force the attacker to remain undetected and receive a lower payoff.
So we see that in this case having more false alarms outweighs the cost of having an undetectable attack.

We can also observe from Figure~\ref{fig:OptimalStrategiesDef} that the defender has the same MTD for the first two scenarios. 
A reason for this is that in \eqref{eq:LinearOptForBayesMTD} the defender's constraints will not be affected by the attacker type with zero payoff. 
Hence, the constraint set for choosing $p$ is the same in the first and second scenario due to the uniform prior across attacker types 2 to 5.
For the different attacker types, it is interesting to see that although the defender has the same MTD in the first and second scenario, the attacker types' mixed strategies do change.
We, further, observe that the mixed strategies $q_{2}^*$ and $q_{3}^*$ for attacker type 2 and type 3, respectively, are very close and the mixed strategies $q_{4}^*$ and $q_{5}^*$ are close as well, in all three investigated scenarios. 
For example, $\|q_{2}^*-q_{3}^*\|_{\infty}$ for the first, second, and third scenario is $0.0672$, $0.0199$, and $0.0131$, respectively, while  $\|q_{4}^*-q_{5}^*\|_{\infty}$ for the first, second, and third scenario is $0.0039$, $0.0024$, and $0.0053$, respectively.

Finally, we look at an interesting  property of the MTD obtained by solving \eqref{eq:LinearOptForBayesMTD}.
The MTD strategy in the first and second scenario, where all attacker types with a nonzero impact have the same prior, is
\begin{align}
    \label{eq:MTDScenario2}
    p^*=\begin{bmatrix} 0 & 0 & 0.25 & 0.15 & 0.1 & 0.5
\end{bmatrix}^T
\end{align}
while in the third scenario the MTD is
\begin{align}
    \label{eq:MTDScenario3}
    p^*=\begin{bmatrix}0 & 0.3333 & 0.1667 & 0.1 & 0.0667 & 0.3333\end{bmatrix}^T.
\end{align}
Interestingly, both \eqref{eq:MTDScenario2} and \eqref{eq:MTDScenario3} have the structure of the proposed MTD in Theorem~\ref{thm:ClosedFormSolutionForNashEquilibrium}, although Theorem~\ref{thm:ClosedFormSolutionForNashEquilibrium} is only for the case where the defender faces one specific attacker type.
This can be explained with the structure of the payoff for each attacker type. 
We determine that $\mathcal{I}_j^{\phi}=\alpha_{\phi}J_D(\tau_j)$ for $\phi\in\lbrace 2,\ldots, 5\rbrace$ (Proposition~3 of \cite{UmsonstCDC18}).
Hence, the payoff is the detector threshold times an attacker type specific constant $\alpha_{\phi}$.
Therefore, we have that
\begin{align*}
\frac{\mathcal{I}_i^{\phi}}{\mathcal{I}_j^{\phi}}=\frac{J_D(\tau_i)}{J_D(\tau_j)}.
\end{align*}

Since this ratio is independent of the attacker type $\phi$, the closed-form solution of the defender's equilibrium MTD for each of the attacker types is the same.

\subsection{Closed-form solution}
\label{sec:SimulationsForClosedLoopSolution}
We finally evaluate the closed-form solution, where we only consider one attacker type.
The attacker that we consider uses $f(a)=\|\bar{x}\|_\infty^2$ as its attack objective, which can be seen as the attacker type with the largest payoff among the four attacker types with a non-zero payoff in the previous section.

We consider a defender that can choose from the six different mean time between false alarms in \eqref{eq:SetForTauInSim}.
Lemma~\ref{lem:DominatedStrategiesInducedMatrixGame} with $n_\phi=1$ shows us that $\tau_1=10$ is strictly dominated by the other strategies and is therefore not used in the Nash equilibrium.
Setting $i=2$, i.e., using the set of all strictly dominating strategies, we determine that $q_2=0.4748<1$ such that the condition on $i$ in Theorem~\ref{thm:ClosedFormSolutionForNashEquilibrium} holds with $i^*=2$. Using Theorem~\ref{thm:ClosedFormSolutionForNashEquilibrium}, we determine that
\begin{align}
    \label{eq:OptPSim}
    p^*=\begin{bmatrix}0 & 0.3333 & 0.1667 & 0.1 & 0.0667 & 0.3333\end{bmatrix}^T
\end{align}
and 
\begin{align}
    \label{eq:OptQSim}
   \!\!\!\!\!q^*=\begin{bmatrix}0 & 0.4748 & 0.4857 & 0.0364 & 0.0029 & 0.0002\end{bmatrix}^T.
\end{align}
These strategies are also obtained with the optimization problem \eqref{eq:LinearOptForBayesMTD}. 
So we see that our closed-form solution coincides with the solution of the linear program when $n_{\phi}=1$.
Furthermore, the MTD \eqref{eq:OptPSim} coincides with \eqref{eq:MTDScenario3}, while \eqref{eq:OptQSim} does not coincide with any of the attacker type distributions obtained in the previous section.

Next, we evaluate the effect of the size $m$ of the set of average times between false alarms the defender can choose from on the smallest index $i$ in the support set of the defender and its probability $q_i^*$.
We do so by considering values of $m$ from $1$ to $100$ and set $\tau_j=10+(j-1)50$, where $j\in[1,m]$.
The goal here is to show how the index $i$ changes as the size of the set of $\tau$ varies.
First, for a given $m$ the Nash equilibrium obtained from \eqref{eq:LinearOptForBayesMTD} coincides with the closed-form solution in Theorem~\ref{thm:ClosedFormSolutionForNashEquilibrium} for all investigated $m$.
Now let us analyse how the index $i$ and with it $q_i^*$ changes as $m$ increases.
\begin{figure}
    \centering
    \includegraphics[scale=0.5]{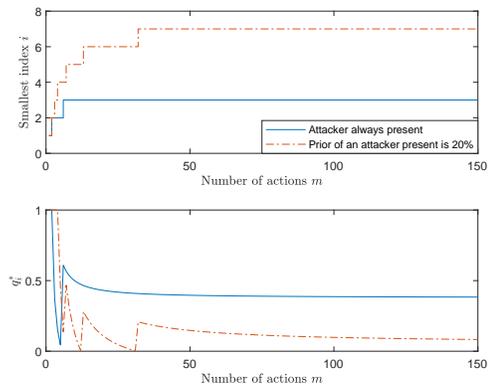}
    \caption{The upper plot shows how the smallest index $i=i^*$, for which \eqref{eq:ConditionForQi} holds, changes when $m$ increases. 
    The lower plot shows the trajectory of $q_i^*$ in \eqref{eq:ClosedFormNashForAttacker} over $m$. The solid lines represent the case with an attacker always being present, while the dashed lines represent the case where the chance of the attacker being present is $20\,\%$.}
    \label{fig:ChangeOfQi}
\end{figure}
In Figure~\ref{fig:ChangeOfQi}, we see the index $i$ of the mixed Nash equilibrium in the upper plot and $q_i^*$ in the lower plot over $m$.
First, note the evolution of $q_i^*$ over $m$. 
Furthermore, we analyse two different cases, one where the attacker is always present and one where the prior of the attacker being present is $0.2$ and the prior of an attacker with zero payoff is $0.8$ (see Remark~\ref{rem:DummyAttackerForClosedFormSolution}).
Every time $i$ changes, i.e., the smallest $\tau$ used in the Nash equilibrium changes for a given $m$, $q_i^*$ jumps from a value close to zero back up to a larger value just to decrease to zero again as $m$ increases until the next jump. 
This is consistent with $q_i^*=1-\sum_{l=i+1}^m\frac{c_{\text{F}}}{\mathcal{I}_l}\left(\frac{1}{\tau_{l-1}}-\frac{1}{\tau_{l}}\right)$, which is decreasing as $m$ increases.
Take, for example, the interval $m\in[2,6]$ for the case where there is always an attacker (solid line in Figure~\ref{fig:ChangeOfQi}). 
In this interval $i=2$ and we see that $q_i^*$ has its maximum $1$ at $m=2$ and it decays to $q_i^*=0.02456$ at $m=6$.
However, if we choose $i=2$ for $m=7$, then $q_{i}^*=-0.0148$ is smaller than zero and, therefore, using $\tau_{2}$ in the Nash equilibrium does not lead to a proper probability distribution when $m=6$.
Hence, $i$ needs to jump from $2$ to $3$ to guarantee that $q^*$ is a probability distribution with $q_i^*\in [0,1)$. 
Furthermore, if $m=100$ then only the action $\tau_{1}$ is strictly dominated.
However, in the Nash equilibrium the smallest $\tau$ used has index $i=3$. 
Therefore, we see that even if an action is not strictly dominated the action is not necessarily used in the Nash equilibrium. With our results in Theorem~\ref{thm:ClosedFormSolutionForNashEquilibrium} we understand the reasons behind that in the game presented here.
Similar observations are made for the case when the prior of the attacker with a non-zero payoff is $0.2$ (dashed lines in Figure~\ref{fig:ChangeOfQi}).
Furthermore, we see since the attacker is only present with a chance of $20\,\%$ more strategies of the defender are strictly dominated.
This observation is in line with our intuition, since the attack is less likely to happen and the defender can choose larger thresholds to avoid false alarms without fearing a larger impact.
\section{Conclusions}
\label{sec:Conclusions}
In this paper, we presented a moving target defense strategy against stealthy sensor attacks.
To find the moving target defense, we formulated a game where the defender periodically switches the detector threshold at random and the attacker has access to all sensor measurements. While the attacker wants to maximize its payoff the defender wants to minimize its cost consisting of the cost for false alarms and the cost induced by the attacker's payoff.
However, the defender is not certain about which attacker it faces and only knows the prior of the attacker's possible type.
We analyzed one period of this periodic switching game and showed that for one period we can find a strategically equivalent matrix game.
For this matrix game, we use the Bayesian Nash equilibrium to determine the equilibrium MTD strategy and presented a necessary and sufficient condition for when a MTD for the defender exists.
Furthermore, we showed that the MTD can be found by solving a linear program.
For the case, where the defender only faces one attacker/knows the attacker type exactly, we presented a closed-form solution for the moving target defense.
In the numerical evaluation, we saw how the thresholds used by the defender depend on the prior of an attack happening. 
The mere threat by the defender of randomly choosing a lower threshold with a low probability forces each attacker type to choose attacks with a lower impact which are stealthy for even small thresholds.
The reason for that is that the attacker will not be able to get a payoff when it is detected.

If we believe that the attacker might observe the defender's switching pattern before attacking, the attacker could have a larger payoff.
Therefore, one direction of future work is not to investigate the Bayesian Nash equilibrium, but the Bayesian Stackelberg equilibrium, where it is assumed that the attacker observes the defender first.
Another direction of future work would be to investigate the repeated game setting.
Furthermore, an in-depth analysis of the optimal choice of the set $\lbrace \tau_1,\ldots,\tau_m \rbrace$ is also an avenue of future work.

\bibliographystyle{IEEEtran}
\bibliography{IEEEabrv,IEEEexample}

\end{document}